\newtheorem{theorem}{Theorem}[section]
\newtheorem{lemma}{Lemma}[section]
\newenvironment{proof}{\smallskip\noindent{\bf Proof.}\hskip \labelsep}
\newcommand\be{\begin{equation}}
\newcommand\ee{\end{equation}}
\newcommand\ber{\begin{eqnarray}}
\newcommand\eer{\end{eqnarray}}
\newcommand\bea{\begin{eqnarray}}
\newcommand\eea{\end{eqnarray}}
\newcommand\berr{\begin{eqnarray*}}
\newcommand\eerr{\end{eqnarray*}}
\newcommand\e{\mathrm{e}}\newcommand\pa{\partial}
\newcommand\ii{\mathrm{i}}
\newcommand{\dd}{\mathrm{d}}\newcommand{\vp}{\varphi}\newcommand{\D}{{\cal D}}
\newcommand{\nn}{\nonumber}\newcommand\ep{\epsilon}
\newcommand{\vep}{\varepsilon}\newcommand\bfR{\mathbb{R}}
\newcommand\lm{\lambda}\newcommand{\HH}{\mathscr{H}}
\begin{document}

\title{Domain Wall Solitons Arising in \\ Classical Gauge Field Theories}
\author{Lei Cao\\School of Mathematics and Statistics\\ Henan University\\
 Kaifeng, Henan 475004, PR China\\\\ Shouxin Chen\\Institute of Contemporary Mathematics, and\\School of Mathematics and Statistics\\Henan University\\Kaifeng, Henan 475004, PR China\\\\Yisong Yang\\Courant Institute of Mathematical Sciences\\New York University\\New York, NY 10012, USA}
\date{}
\maketitle

\begin{abstract} Domain wall solitons are basic constructs realizing phase transitions in various field-theoretical models and are solutions to 
some nonlinear ordinary differential equations descending from the corresponding full sets of governing equations in higher dimensions. In this paper,
we present a series of domain wall solitons arising in several classical gauge field theory models. In the context of the Abelian gauge field theory, we 
unveil the surprising result that the solutions may explicitly be constructed, which enriches our knowledge on integrability of the planar Liouville type equations
in their one-dimensional limits. In the context of the non-Abelian gauge field theory, we obtain some existence theorems for domain wall solutions arising in the 
 electroweak type theories by developing some methods of calculus of variations formulated as direct and constrained minimization problems over a weighted Sobolev space.

{\bf Keywords:} Gauge field theories, domain walls, vortex equations, Liouville type equations, integrability, exact solutions, weighted Sobolev spaces,
calculus of variations.

{\bf PACS numbers:} 02.30.-f, 11.15.-q, 74.20.De

{\bf MSC numbers:} 34B40, 35C05, 35J50, 35Q51, 81T13

\end{abstract}

\section{Introduction}\label{s1}

A domain-wall soliton describes a phase transition between two phases often referred to as domains. In the study of magnetism,
a magnetic domain is a region within a material which has uniform magnetization so that the individual moments of the atoms are aligned with one another.
A domain wall is an interface separating two distinct magnetic domains, realizing a transition between different magnetic moments usually described by an angular displacement variable.
One of the earliest domain wall models is that given by the classical one spatially dimensional
sine--Gordon equation which governs the simplest domain walls occurring in the Landau--Lifshitz magnetism theory. 
Being integrable, the sine--Gordon model has served
as an illustrative mathematical laboratory and inspired the development of many profound concepts in classical and quantum field theories \cite{DE,FK,R}. Unfortunately, due to the
complicated structures of domain wall models,
most of the underlying governing equations are not integrable, and thus, one needs to resort to nonlinear functional analysis in order to achieve some understanding
of the problem of interest. For example, in the Ginzburg--Landau theory of superconductivity \cite{A,GL,S,T},
the energy of a domain wall soliton connecting the normal and superconducting phases, called the surface energy, is to be evaluated, in order to classify the types of superconductors, so that a negative surface energy results in Type I superconductors, and a positive one, Type II. However, a mathematically rigorous classification
of superconductivity along the line of determining the sign of surface energy has not been established due to the difficulties involved in an existence theory for the domain wall solutions. As another example, we mention a sweeping mechanism proposed in \cite{DLV} to resolve the celebrated monopole problem \cite{P,Pre1} via domain walls
which may be created along with monopoles but the latter can be annihilated by the former through unwinding and dissipation. These and many other problems
suggest that various domain wall equations arising in field theories are of interest and importance to be studied systematically. 

It is well known that the equations of motion of gauge field theories in their general setting are difficult to approach except in a critical situation when there is
a so-called BPS structure after the work of Bogomol'nyi \cite{B} and Prasad--Sommerfield \cite{PS}.
The exploration of such a BPS structure has enabled a harvest of knowledge on topological solitons, including vortices, monopoles, and instantons in the past, and
more recently, domain walls as well, by virtue of superpotentials (cf. \cite{CY} and references therein)
and nonlinear functional analysis (cf. \cite{Sakai} as an example). In the present work, we construct BPS domain wall solitons
in some simplest but classical and fundamental gauge field theories.

First, we consider the Abelian Higgs and Abelian Chern--Simons--Higgs theories for which the BPS vortex equations have been well studied. The common feature of these
equations is that they may all be reduced into a class of nonlinear elliptic equations of the Liouville type but,
unlike the Liouville equation, the broken vacuum symmetry renders these equations
non-integrable \cite{Schiff}. It is interesting that these equations contain a domain wall substructure. We show that the underlying domain wall equations,
which are also of the Liouville type, allow a certain level of integration so that their exact solutions may well be described. More interestingly, in the Abelian Chern--Simons--Higgs context,
the equation can be completely integrated to give us its full family of solutions explicitly. As a special by-product, we observe that the Abelian Chern--Simons--Higgs equation
is ``more integrable" than the simpler Abelian Higgs equation, which seems surprising and rather unexpected. This part of investigation consists of Sections 2 and 3.

Next, we consider the domain wall equations arising in the electroweak theory which may be regarded as a simplest non-Abelian Yang--Mills--Higgs theory.
As before, these equations are contained in the BPS vortex equations as dimensionally reduced substructures and of the Liouville type. The added subtlety of
the problem is that the governing equations are coupled systems of nonlinear equations which are often non-integrable except in isolated special situations.
Here, we shall resort to means of global functional analysis to construct domain wall solutions. Specifically, we will formulate some variational methods to solve
the electroweak domain wall equations derived in \cite{Bol} and in \cite{AO2,AO3,AO4}  subject to their corresponding boundary conditions. In order to do so, we need to introduce a weighted Sobolev space and develop
appropriate space-embedding tools characterized by a Trudinger--Moser type inequality \cite{Aubin,M,Trudinger}, which enables a direct minimization method
and a constrained minimization method to be effectively carried out. This part of the work consists of Sections 4--7 and will be useful for solving other more
complicated domain wall equations

In the last section, we conclude the paper.

\section{Domain wall equations in Abelian Higgs theory}\label{s2}
\setcounter{equation}{0}\setcounter{remark}{0}

Following Bolognesi {\em et al} \cite{Bol}, we first consider the classical Abelian Higgs theory \cite{JT,NO} defined by the action density
\be\label{x2.1}
{\cal L}=-\frac14 F_{\mu\nu}F^{\mu\nu}+D_\mu q \overline{D^\mu q} -\frac{\kappa^2e^2}2(|q|^2-\xi)^2,
\ee
where $F_{\mu\nu}=\pa_\mu A_\nu -\pa_\nu A_\mu$, $D_\mu q=\pa_\mu q-\ii eA_\mu q$, $\mu,\nu=0,1,2,3$, and $e,\kappa,\xi$ are positive coupling parameters,
so that $\kappa=1$ spells out the BPS limit \cite{B,PS}. With $q=\e^{\ii E t}\phi(x^1,x^2)$ $A_0=\frac{E}e$, $A_3=0$, $A_1,A_2$ being functions depending
on $x^1,x^2$ only, and $\kappa=1$, the Hamiltonian of (\ref{x2.1}) takes the form
\bea
{\cal H}&=&\frac12 F_{12}^2+|D_1\phi|^2+|D_2\phi|^2+\frac{e^2}2(|\phi|^2-\xi)^2\nn\\
&=&\frac12\left(F_{12}+e(|\phi|^2-\xi)\right)^2+|D_1\phi+\ii D_2\phi|^2+e\xi   F_{12}-\ii \epsilon^{ij}\pa_i\left(\overline{\phi}D_j\phi\right),
\eea
where $i,j=1,2$, which leads to the Euler--Lagrange equations
\be\label{x2.3a}
D_i^2\phi=e^2(|\phi|^2-\xi)\phi,\quad \pa_j F_{ij}=\ii e(\phi \overline{D_i\phi}-\overline{\phi}D_i\phi),
\ee
 and the associated BPS vortex equations
\be\label{x2.3}
D_1\phi+\ii D_2\phi =0,\quad
F_{12}=e(\xi-|\phi|^2),\quad x\in\bfR^2,
\ee
respectively, so that (\ref{x2.3}) implies (\ref{x2.3a}). In \cite{JT,T2}, it is shown that (\ref{x2.3a}) and (\ref{x2.3}) are actually equivalent over solutions of
finite energies.
Setting $u=\ln|\phi|^2$, the system (\ref{x2.3}) is reduced \cite{JT}  into the following Liouville type equation \cite{L}
\be\label{x2.4}
\Delta u=e^2(\e^u-\xi)+4\pi\sum_{s=1}^N\delta_{p_s}(x),
\ee
where $\delta_p$ is the Dirac distribution concentrated at $p\in\bfR^2$. An existence and uniqueness theorem for a solution 
of (\ref{x2.4}) satisfying $u=\ln \xi$ at infinity has been established by Taubes \cite{JT,T1}. An open problem in \cite{JT} asks whether one can obtain the solution of (\ref{x2.4}) explicitly. In \cite{Schiff}, it is shown
that (\ref{x2.4})  is not integrable whenever $\xi\neq0$. In fact, the Liouville equation \cite{L} corresponds to the situation where $\xi=0$ in (\ref{x2.4}) which is
integrable by all known methods of integration, including Liouville's method \cite{L}, the B\"{a}cklund transformation \cite{McC}, the method of separation of variables \cite{Leznov},
and inverse scattering \cite{Andreev}.

As in \cite{Bol}, note that it is consistent to impose the domain-wall ansatz in (\ref{x2.3a}) and (\ref{x2.3}) with
\be\label{x2.5}
\phi=\phi(x)=\mbox{real}, \quad A_1=0,\quad A_2=A(x),\quad x=x^1,
\ee
which reduces (\ref{x2.3a})  into
\be\label{x2.6a}
\phi''-e^2 A^2\phi=e^2(\phi^2-\xi)\phi,\quad A''=2e^2\phi^2 A,
\ee
which are the celebrated one-dimensional Ginzburg--Landau equations \cite{S,T}, and (\ref{x2.3}) into
\be\label{x2.6}
\phi'+eA\phi=0,\quad A'=e(\xi-\phi^2),\quad -\infty<x<\infty,
\ee
respectively. It may be examined that (\ref{x2.6}) implies (\ref{x2.6a}). That is, as in their two-dimensional settings, (\ref{x2.3a}) and (\ref{x2.3}), (\ref{x2.6})
is a reduction of (\ref{x2.6a}), which will now be considered.

To proceed, we see from the first equation in (\ref{x2.6}) that $\phi$ cannot change sign so that we may assume $\phi>0$ without loss of generality for a nontrivial
solution. Hence the first equation in (\ref{x2.6}) gives us $(\ln \phi)'+eA=0$. Inserting this into the second equation in (\ref{x2.6}) and setting $u=\ln\phi^2-\ln\xi$, we arrive at the following normalized equation
\be\label{x2.7}
u''=\lm(\e^u-1),\quad-\infty<x<\infty,
\ee
of a Liouville type, where $\lm=2e^2\xi$. By the maximum principle, it is easily seen that the only solution of (\ref{x2.7}) satisfying the boundary condition
$u(\pm\infty)=0$ is the trivial one, $u=0$. In \cite{Bol}, it is shown with numerics that (\ref{x2.7}) has nontrivial solutions satisfying the boundary conditions
\be\label{x2.8}
u(-\infty)=0,\quad  u(\infty)=-\infty, 
\ee
and 
\be\label{x2.9}
u(-\infty)=-\infty,\quad u(\infty)=-\infty,
\ee
respectively. The phase $u=0$ corresponds to $\phi^2=\xi$ which would give rise to a vanishing magnetic field, $A'=0$, as a consequence of the Meissner effect
\cite{S,T}. This phase may be referred to as the Higgs phase since it pertains to a spontaneous broken vacuum symmetry. The phase $u=-\infty$, on the other hand,
corresponds to $\phi=0$, which would give rise to a full presence of the magnetic field, $A'=e\xi$. This phase may be referred to as the magnetic phase. See \cite{Bol}. Such a correspondence may also be seen clearly via the BPS equations (\ref{x2.6}). Thus, the solutions of (\ref{x2.7}) subject to the boundary conditions
(\ref{x2.8}) and (\ref{x2.9}) are domain-wall solitons representing transitions between these phases.
Below we construct such solutions by integrating (\ref{x2.7}).

To begin with, note that the boundary condition $u(-\infty)=0$ implies $u'(-\infty)=0$. Thus, multiplying (\ref{x2.7}) by $u'$ and integrating over $(-\infty,x)$, we arrive at
the Friedmann type equation
\be\label{x2.10}
(u')^2=2\lm (\e^u-u-1).
\ee
The right-hand side of (\ref{x2.10}) is  positive for any $u\neq0$. So we may rewrite (\ref{x2.10}) as
\be\label{x2.11}
u'=\pm \sqrt{2\lm}\sqrt{\e^u-u-1}.
\ee
Since we are interested in a solution satisfying $u(\infty)=-\infty$, we need to choose the lower sign in (\ref{x2.11}). Besides, let $x_0$ be such that $u_0=u(x_0)<-1$.
Then we have in view of (\ref{x2.11}) the integral
\be\label{x2.12}
\int_{u_0}^{u(x)}\frac{\dd u}{\sqrt{\e^u-u-1}}=-\sqrt{2\lm}(x-x_0).
\ee
Using $-u-1<\e^u-u-1<-u$ in (\ref{x2.12}), we get
\be\label{x2.13}
2\left(\sqrt{-u(x)-1}-\sqrt{-u_0-1}\right)>\sqrt{2\lm}(x-x_0)>2\left(\sqrt{-u(x)}-\sqrt{-u_0}\right),\quad x>x_0,
\ee
since $u(x)<u_0$, which leads to the following sharp asymptotic estimate
\be\label{x2.14}
u(x)= -\frac{\lm x^2}2+\mbox{O}(1),\quad x\to\infty.
\ee

In a similar token, we can estimate the behavior of $u(x)$ as $x\to-\infty$. In fact, since $u(-\infty)=0$ and $u'<0$,
for any $\vep\in(0,1)$ we may choose $x_0$ sufficiently negative
such that $\e^{u(x)}>1-\vep$ for $x<x_0$. Hence
\be
(1-\vep) u(x)^2 <2\left(\e^{u(x)}-u(x)-1\right)<u^2(x),\quad x<x_0.
\ee
Inserting this into (\ref{x2.12}), we obtain
\be
\frac1{\sqrt{1-\vep}}\ln\left|\frac{u(x)}{u_0}\right|<\sqrt{\lm}(x-x_0)<\ln\left|\frac{u(x)}{u_0}\right|,\quad x<x_0.
\ee
Since $\vep>0$ may be arbitrarily small, we are led to the following sharp asymptotic estimate
\be\label{xx2.19}
u(x)=\mbox{O}\left(\e^{-\sqrt{\lm}|x|}\right),\quad x\to-\infty.
\ee

It may be shown that, up to a translation, the above constructed solution is unique. In fact, by translation invariance, we may assume $u(0)=-1$ (say). Then we
may use a standard continuity argument to show that $u'(0)$ is uniquely determined \cite{Ylee,Ybook}.

Next, we construct solutions of (\ref{x2.7}) satisfying the boundary condition (\ref{x2.9}).
Such a solution will have a global maximum $u_0$ (say). By translation invariance of (\ref{x2.7}), we may assume $u(0)=u_0$. Since $u''(0)\leq0$, we have $u_0\leq0$ in view of (\ref{x2.7}). In particular, $u(x)\leq0$ for all $x$. Hence, multiplying (\ref{x2.7}) by
$u'$, integrating around $x=0$, and using $u'(0)=0$, we obtain
\be\label{x2.18}
(u')^2 =2\lm(\e^u-u -\e^{u_0}+u_0),
\ee
which is slightly different from (\ref{x2.10}).
Let $f(u)=\e^u-u -\e^{u_0}+u_0$. Then $f(u_0)=0$.  Since $f'(u)<0$ for $u<0$, so $f(u)>0$ for all $u<u_0$. In other words, the right-hand side of (\ref{x2.18})
remains positive for $x\neq0$. Thus we obtain
\be\label{x2.19}
u'=\pm\sqrt{2\lm}\sqrt{\e^u-u-\e^{u_0}+u_0}.
\ee
If the solution goes to $-\infty$ as $x\to \infty$, we need to choose the lower sign in (\ref{x2.19}). Thus we get
\be\label{x2.20}
\int_{u_0}^{u(x)}\frac{\dd u}{\sqrt{\e^u-u-\e^{u_0}+u_0}}=-\sqrt{2\lm}x,\quad x>0.
\ee
For the part in $x<0$, since we are to get $u(-\infty)=-\infty$, we may choose the upper sign in (\ref{x2.19}) or flip the solution in $x>0$ by setting $x\mapsto -x$
to get the solution with $x<0$. In this way we obtain a solution of (\ref{x2.7}) which satisfies the boundary condition (\ref{x2.9}).

Summarizing, we may state the following theorem regarding (\ref{x2.7}).

\begin{theorem}
The one-dimensional Liouville type equation (\ref{x2.7}) governing the domain-wall solitons in the Abelian Higgs theory is partially integrable in the sense that its
solutions may all be constructed via a further reduced first-order equation, (\ref{x2.10}) or (\ref{x2.18}), which gives us the following conclusions.
\begin{enumerate}
\item[(i)] Under the boundary condition (\ref{x2.8}), the equation (\ref{x2.7}) becomes (\ref{x2.10}) whose solution is unique up to a translation and enjoys
the behavior (\ref{x2.14}) and (\ref{xx2.19}) asymptotically, as $x\to\pm\infty$, respectively. In other words, in this situation, the full set of solutions is a 
one-parameter family of functions related by translations.

\item[(ii)] Under the boundary condition (\ref{x2.9}),
for any given $x_0$ and $ u_0\leq0$, the equation (\ref{x2.7})  has a unique solution $u$ which attains its global maximum $u_0$ at $x=x_0$ which is
given by the first-order equation (\ref{x2.18}), or more precisely,
(\ref{x2.19}) with the choices of the upper and lower signs for $x<x_0$ and $x>x_0$, respectively, and enjoys the sharp asymptotic behavior
\be
u(x)=-\frac{\lm x^2}2+\mbox{O}(1),\quad x\to\pm\infty.
\ee
In other words,  a full two-parameter family description is obtained for the solutions of (\ref{x2.7}) subject to the boundary condition (\ref{x2.9}).
\end{enumerate}
\end{theorem}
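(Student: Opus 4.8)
The plan is to exploit that (\ref{x2.7}) is autonomous, so that a single integration against $u'$ collapses it to a first-order conservation law, after which the entire analysis can be read off in the $(u,u')$ phase plane. The equation has the unique equilibrium $(u,u')=(0,0)$, which linearises to $u''=\lm u$ and is therefore a hyperbolic saddle with rates $\pm\sqrt\lm$. The two boundary conditions (\ref{x2.8}) and (\ref{x2.9}) single out two qualitatively different orbits, and the whole theorem amounts to identifying these orbits, parametrising them, and extracting their sharp endpoint asymptotics.

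For part (i) I would first note that $u(-\infty)=0$ forces $u'(-\infty)=0$ — a short tail argument using monotonicity of $u$ together with the conservation law — so that multiplying (\ref{x2.7}) by $u'$ and integrating from $-\infty$ yields the Friedmann-type identity (\ref{x2.10}). Its right-hand side $2\lm(\e^u-u-1)$ is strictly positive for $u\neq0$, hence the orbit is a single smooth curve, and the requirement $u(\infty)=-\infty$ selects the branch with $u'<0$, leading to the separated integral (\ref{x2.12}). The two asymptotics then come from squeezing the integrand: for $x\to+\infty$ the bounds $-u-1<\e^u-u-1<-u$ in (\ref{x2.13}) trap the integral between explicit square roots and give the sharp quadratic law (\ref{x2.14}); for $x\to-\infty$, using the lower bound $\e^u>1-\vep$ near the saddle linearises the integrand and produces the exponential approach (\ref{xx2.19}), consistent with the eigenvalue rate $\sqrt\lm$ at the saddle. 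Uniqueness up to translation is then a phase-plane statement: the value at $-\infty$ fixes the energy constant, so all solutions lie on the same level curve, namely the branch of the unstable manifold running to $u=-\infty$, and differ only by an $x$-translation; equivalently, normalising $u(0)=-1$ pins $u'(0)$ through (\ref{x2.10}) and the initial-value problem finishes the argument.

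For part (ii) the organising device is to pivot at the global maximum. A solution satisfying (\ref{x2.9}) is bounded above and tends to $-\infty$ at both ends, so it attains a maximum $u_0$ at some $x_0$ with $u'(x_0)=0$; since $u''(x_0)\le0$, (\ref{x2.7}) forces $u_0\le0$. Integrating (\ref{x2.7}) against $u'$ from $x_0$ produces the shifted conservation law (\ref{x2.18}), and I would verify that $f(u)=\e^u-u-\e^{u_0}+u_0$ satisfies $f(u_0)=0$ and $f'(u)=\e^u-1<0$ for $u<0$, hence $f>0$ for all $u<u_0$. This makes (\ref{x2.19}) well defined on both sides of $x_0$, with the upper sign for $x<x_0$ and the lower sign for $x>x_0$, and simultaneously exhibits the solution as even about $x_0$. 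The same squeeze as before, applied to (\ref{x2.20}), gives global existence and the two-sided behaviour $u(x)=-\frac{\lm x^2}2+\mbox{O}(1)$. Existence and uniqueness for each prescribed pair $(x_0,u_0)$ follow because the Cauchy data $(u,u')=(u_0,0)$ at $x_0$ determine the trajectory uniquely; letting $x_0\in\bfR$ and $u_0\le0$ range freely yields exactly the advertised two-parameter family.

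The point demanding the most care is quantitative rather than qualitative. Because $\e^u-u-1\sim\frac12 u^2$ as $u\to0$, the integrand $1/\sqrt{\e^u-u-1}\sim\sqrt2/|u|$ has a non-integrable singularity there, which is precisely why the Higgs value $u=0$ is attained only in the limit $x\to-\infty$ rather than at a finite station; the problem is thus a connection (heteroclinic-type) problem and not a two-point problem on a bounded interval, and one must confirm that the selected branch indeed runs all the way to $u=-\infty$ as $x\to+\infty$. The remaining delicate step is pinning the exact constants in (\ref{x2.14}) and (\ref{xx2.19}), which hinges on keeping the errors in the sandwich bounds (\ref{x2.13}) uniform as $u\to-\infty$ and as $u\to0$, respectively. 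Both are controllable, so beyond this bookkeeping I expect no essential obstruction, the hyperbolicity of the saddle making the qualitative phase portrait routine.
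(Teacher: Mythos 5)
Your proposal is correct and follows essentially the same route as the paper: multiplying by $u'$ to obtain the first integrals (\ref{x2.10}) and (\ref{x2.18}), using the sandwich bounds $-u-1<\e^u-u-1<-u$ and $\e^u>1-\vep$ to extract the asymptotics (\ref{x2.14}) and (\ref{xx2.19}), pivoting at the global maximum for part (ii), and settling uniqueness by normalizing the solution and appealing to the initial-value problem. The phase-plane/saddle-point framing is a pleasant repackaging but does not change the substance of the argument.
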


We may extend our study to the situation of a massive $SU(2)$ gauge field theory known as the Georgi--Glashow model. For this purpose, let $\sigma_a$ ($a=1,2,3$)
denote the Pauli spin matrices.
Then $t_a=\frac{\sigma_a}2, \,a=1,2,3$ is a set of generators of $SU(2)$
satisfying the commutation relation
$
[t_a,t_b]=\ii\ep_{abc}t_c
$
and  Tr$(t_a t_b)=\frac{\delta_{ab}}2$. Any $su(2)$-valued gauge potential $A_\mu$ may be represented
 in the 
matrix form   
$
A_\mu=A^a_\mu t_a.
$
As a result,
the massive, simplest non-Abelian, gauge theory model under consideration governing the dynamics of a
special particle mediating electroweak interactions, called the $W$-particle,  which is represented by the complex field
$
W_\mu=\frac1{\sqrt{2}}(A^1_\mu+\ii A^2_\mu),
$
is defined by the Lagrangian density
\be\label{f1}
{\cal L}=-\frac12 \mbox{ Tr }( F_{\mu\nu} F^{\mu\nu})+m_W^2 \overline{W}_\mu W^\mu,
\ee
over the $(3+1)$-dimensional Minkowski spacetime, where the field strength tensor $F_{\mu\nu}$, is given by
\be\label{f2}
F_{\mu\nu}=F_{\mu\nu}^a t_a=\pa_\mu A_\nu-\pa_\nu A_\mu+\ii e[A_\mu,A_\nu],
\ee
with $m_W>0$ the $W$-particle mass and $e>0$ the electric charge parameter.
It is customary to regard  $A^3_\mu$  as an
electromagnetic gauge potential, $A^3_\mu=P_\mu$,
with the
associated field strength tensor
$
P_{\mu\nu}=\partial_\mu P_\nu-\partial_\nu P_\mu.
$
With the notation $D_\mu=\partial_\mu-\ii e P_\mu$, we have
\bea\label{f3}
{\cal L}
&=&-\frac14 P_{\mu\nu}P^{\mu\nu}-\frac12(D_\mu W_\nu-D_\nu W_\mu)
\overline{(D^\mu W^\nu-D^\nu W^\mu)}
  +m^2_W \overline{W}_\mu W^\mu\nn\\
&&+\ii e P_{\mu\nu}W^{\mu}\overline{W}^\nu+\frac{e^2}2\,([\overline{W}_\mu
\overline{W}^\mu][W_\nu W^\nu]-[\overline{W}_\mu W^\mu]^2),
\eea
which describes the interaction between the weak force bosons, or the $W$-particles,
 and the electromagnetic
photons.
Vortex-like solutions are characterized by the further ansatz
$
W_0=W_3=0,  P_0=P_3=0, 
W_j, P_j$ depending only 
on $x^1,x^2$, $ j=1,2$, so that the equations of motion are greatly reduced. Furthermore, it is consistent to assume that $W_1, W_2$  be represented by a 
single complex scalar field $W$ through
$
W_1=W, W_2=\ii W.
$
As a consequence, the equations of motions may be derived as the Euler--Lagrange equations of the Hamiltonian
\begin{equation}\label{f8}
{\cal H}=\frac12P^2_{12}+|D_1W+\ii D_2W|^2+2m^2_W|W|^2-2eP_{12}|W|^2+2e^2|W|^4,
\end{equation}
which are
\bea\label{f9}
D_i^2W&=&2m^2_WW-3eP_{12}W+4e^2|W|^2W,\\
\partial_i P_{ij}&=&\ii e(\overline{W}[D_jW]-W\overline{[D_jW]})
                 +3e\ep_{ij}(\overline{W}[
D_iW]+W\overline{[D_iW]}).\label{f10}
\eea
These equations are still rather complicated.  Fortunately, it is derived by Ambj{\o}rn and Olesen in \cite{AO1}  the BPS vortex equations
\be\label{x2.22}
D_1 W+\ii D_2 W=0,\quad P_{12}=2e|W|^2+\frac{m^2_W}e,
\ee
so that any solution of (\ref{x2.22}) solves (\ref{f9}) and (\ref{f10}) as well. Here we note further that these equations also allow
a domain wall substructure. In  fact, as in (\ref{x2.5}), if we impose the domain-wall  ansatz
\be
W=W(x)=\mbox{real}, \quad P_1=0,\quad P_2=P(x),
\ee
then (\ref{f9})--(\ref{f10}) and (\ref{x2.22}) become
\be\label{x232}
W''-e^2 P^2 W=2m_W^2 W-3e P' W+4e^2 W^3,\quad P''=2e^2 W^2 P+6e W W',
\ee
and
\be\label{x2.23}
W'+ePW=0,\quad P'=2e W^2+\frac{m_W^2}e,
\ee
respectively. It can be checked that any solution of (\ref{x2.23}) satisfies (\ref{x232}). That is, as in the Abelian Higgs situation, (\ref{x2.23}) serves as a reduction
of (\ref{x232}). Thus it suffices to solve (\ref{x2.23}).

For a nontrivial solution with $W\neq0$, we may again assume $W>0$ and set $u=\ln W^2$ in (\ref{x2.23}) to arrive at
\be\label{x2.24}
u''=-4e^2 \e^u-2m^2_W,\quad -\infty<x<\infty.
\ee
This equation is similar to (\ref{x2.7}) and may be analyzed and  integrated as for the Abelian Higgs case considered earlier. Here we omit the discussion to avoid
redundancy.

Motivated by (\ref{x2.7}) and (\ref{x2.24}), we consider the integrability of the general equation
\be\label{x2.25}
u''=\lm (\e^u-\vep),
\ee
where $\lm,\vep$ are arbitrary constants. To proceed, setting $f=\e^u$, we have
\be\label{x2.26}
(\ln f)''=\lm (f-\vep).
\ee
Following the method of Liouville \cite{L,Ybook}, we further set $f=g'$ in (\ref{x2.26}) and integrate to get
\be\label{x2.27}
f'=\lm gg'-\lm\vep x g'+C_1 g'
\ee
where $C_1$ is an integration constant. Integrating (\ref{x2.27}) again, we obtain
\be\label{x2.28}
g'=\frac\lm2 g^2 -\lm\vep \left(xg-\int g(x)\,\dd x\right)+C_1 g +C_2,
\ee
where $C_2$ is also an integration constant. This equation is an integro-differential equation of the Riccati type and integrable when and only when $\vep=0$,
which will be studied further in the next section.
Our observation compares favorably with a similar integrability study carried out for the classical planar Liouville type equations based on the Painleve tests \cite{Schiff}.

\section{Domain walls in Abelian Chern--Simons--Higgs theories}\label{s4}
\setcounter{equation}{0}

We begin by considering  the gauged Schr\"{o}dinger equation over the $(2+1)$-dimensional Minkowski spacetime, coupled with a Chern--Simons electromagnetism, governed by the action density
\be\label{x3.1}
{\cal L}=-\frac{\kappa}2\epsilon^{\mu\nu\alpha}A_\mu\pa_\nu A_\alpha+
\ii\overline{\psi} D_0\psi-\frac1{2m}|D_j\psi|^2+\frac g2|\psi|^4,
\ee
where $\epsilon^{\mu\nu\alpha}A_\mu\pa_\nu A_\alpha=\frac12\epsilon^{\mu\nu\alpha}A_\mu F_{\nu\alpha}$ 
is the Chern--Simons term, $\psi$  a complex scalar field, $A_\mu$ the gauge potential, $D_\mu=\pa_\mu-\ii A_\mu$, and $\kappa,m,g>0$ are coupling
constants. The static equations of motion of (\ref{x3.1}) are
\be\label{x3.2}
A_0\psi=-\frac1{2m}D_j^2\psi-g|\psi|^2\psi,\quad
F_{12}=\frac1\kappa|\psi|^2,\quad
\pa_j A_0=-\frac\ii{2m\kappa}\ep_{jk}(\psi\overline{D_k\psi}-\overline{\psi}D_k\psi),
\ee
which are complicated. In \cite{JP1,JP2}, it is shown by Jackiw and Pi that, when $g=\frac1{m\kappa}$, (\ref{x3.2}) may be reduced into the BPS equations
\be\label{x3.3}
D_1\psi+\ii D_2\psi=0,\quad
F_{12}=\frac1{\kappa}|\psi|^2,\quad
A_0=-\frac1{2m\kappa}|\psi|^2,
\ee
among which the third equation is a constraint. We now pursue a domain wall substructure of (\ref{x3.2}) and (\ref{x3.3}) with the ansatz
\be\label{x3.4}
A_0=A(x),\quad A_1=0,\quad A_2=A(x),\quad\psi=\psi(x)=\mbox{real}.
\ee
Hence (\ref{x3.2}) and (\ref{x3.3}) become
\be\label{x3.5a}
A_0\psi=-\frac1{2m}\psi''+\frac1{2m}A^2\psi-g\psi^3,\quad A'=\frac1{\kappa}\psi^2,\quad A_0'=\frac1{m\kappa}\psi^2 A,\quad -\infty<x<\infty,
\ee
and
\be\label{x3.5b}
\psi'+A\psi=0,\quad A'=\frac1{\kappa}\psi^2,\quad A_0=-\frac1{2m\kappa}\psi^2,\quad -\infty<x<\infty,
\ee
respectively, in which the third equations are constraints which define $A_0$. With $g=\frac1{m\kappa}$, it may be examined that (\ref{x3.5b})
implies (\ref{x3.5a}). Namely, (\ref{x3.5a}) is reduced into (\ref{x3.5b}). Thus, with $u=\ln \psi^2$ in (\ref{x3.5b}), we obtain the further reduced equation
\be\label{x3.6}
u''=-\frac2\kappa \e^u,\quad -\infty<x<\infty,
\ee
which is contained as a special case of (\ref{x2.25}) when setting $\vep=0$, which is the one-dimensional version of the classical planar Liouville equation \cite{L}, and integrable, as commented in the previous section. In fact, (\ref{x3.6}) indicates that its solution is globally concave down and symmetric about its unique
global maximum, say $u_0$. Assume $u(x_0)=u_0$. Then, using $u'(x_0)=0$, we can integrate (\ref{x3.6}) to obtain
\be\label{x3.7}
\mbox{arctanh} \sqrt{1-\e^{u-u_0}}=\frac{\e^{\frac{u_0}2}}{\sqrt{\kappa}}(x-x_0),\quad x>x_0,
\ee
or explicitly,
\be\label{x3.8}
u(x)=u_0-2\ln\left(\cosh\left[\frac{\e^{\frac{u_0}2}}{\sqrt{\kappa}}(x-x_0)\right]\right),\quad -\infty<x<\infty,
\ee
which is even about $x=x_0$. Thus the general solution of the one-dimensional Liouville equation (\ref{x3.6}) depends on two arbitrary parameters, $x_0$ and 
$u_0$, and its asymptotes at $x=\pm\infty$ are determined by its global maximum $u_0$.

For completeness, we also consider the situation where $\kappa<0$ in (\ref{x3.6}). Then the solution is concave up and symmetric about its global minimum $u_0$
at $x_0$. Thus, as above, we obtain in a similar way the explicit solution
\be\label{x3.9}
u(x)=u_0+\ln\left(1+\tanh^2\left[\frac{\e^{\frac{u_0}2}}{\sqrt{-\kappa}}(x-x_0)\right]\right),\quad -\infty<x<\infty.
\ee
It is surprising to note that, unlike its planar version where all solutions blow up outside a compact region \cite{Sattinger}, the solutions of (\ref{x3.6}) for $\kappa<0$ are all globally defined over the full real line.

We now turn our attention to the relativistic Chern--Simons--Higgs theory introduced by Hong, Kim, and Pac \cite{HKP}, and Jackiw and Weinberg \cite{JW}.
In normalized
units and assuming the critical coupling, the Lagrangian action density of this relativistic Abelian
theory is written
\begin{equation}\label{x3.10}
{\cal L}=-\frac14{\kappa}\ep^{\mu\nu\alpha}A_\mu F_{\nu
\alpha}+D_\mu\phi\overline{D^\mu\phi}-\frac1{\kappa^2}|\phi|^2(1-|\phi|^2)^2,
\end{equation}
where  $\kappa\in \bfR$ is nonzero and $\phi$  a complex scalar field which can be viewed as 
a Higgs field.
The Euler--Lagrange equations of (\ref{x3.10}) are
\be
\frac12\kappa\ep^{\mu\nu\alpha}F_{\nu\alpha}=j^\mu,\quad
D_\mu D^\mu\phi=-\frac1{\kappa^2}(2|\phi|^2[|\phi|^2-1]+[|\phi|^2-1]^2)\phi,\label{x3.11}
\ee
where
$
 j^\mu=\ii(\phi\overline{D^\mu\phi}-\overline{\phi}D^\mu\phi),
\mu=0,1,2
$,
 is a conserved matter 
current density. These equations are rather complicated. In \cite{HKP,JW}, it is shown that (\ref{x3.11}) enjoys a BPS reduction as in the Abelian Higgs situation
given as the system of equations
\be\label{x3.12}
D_1\phi+\ii D_2\phi=0,\quad
F_{12}=\frac2{\kappa^2}|\phi|^2(1-|\phi|^2),\quad
\kappa F_{12}=2A_0|\phi|^2,
\ee
which may be derived from the energy density of a solution of (\ref{x3.11}), which reads
\be\label{x3.12b}
{\cal H}=\frac{\kappa^2 F_{12}^2}{4|\phi|^2}+|D_i\phi|^2+\frac1{\kappa^2}|\phi|^2(1-|\phi|^2)^2.
\ee
 It is again consistent to take the domain-wall ansatz (\ref{x3.4}) with $\psi$ being replaced by $\phi$ so that the systems (\ref{x3.11}) and (\ref{x3.12}) become
\be\label{x3.14}
\phi''-A^2\phi+A_0^2\phi=\frac1{\kappa^2}(\phi^2-1)(3\phi^2-1)\phi,\quad\kappa A_0'=2A\phi^2,\quad \kappa A'=2A_0\phi^2,
\ee
and
\be\label{6.3}
\phi'+A\phi=0,\quad
A'=\frac{2}{\kappa^2}\phi^2(1-\phi^2),\quad \kappa A'=2A_0\phi^2,
\ee
respectively, in which the last equations serve as  constraints. It may be checked that (\ref{6.3}) implies (\ref{x3.14}). That is, (\ref{6.3}) is a reduction of (\ref{x3.14}).
Again $\phi$ may be assumed to stay positive and the substitution $u=2\ln \phi$ recasts (\ref{6.3}) into
\be\label{6.4}
u''=\lm \e^u (\e^u-1),\quad -\infty<x<\infty,
\ee
with $\lm=\frac4{\kappa^2}$. Multiplying (\ref{6.4}) with $u'$ and integrating, we find
\be\label{6.5}
(u')^2(x)=(u')^2(x_0)+\lm(\e^{u(x)}-1)^2-\lm(\e^{u(x_0)}-1)^2.
\ee

Firstly, we are interested in a solution satisfying $u(-\infty)=0$. Thus, letting $x_0\to-\infty$ and using $u'(-\infty)=0$ in (\ref{6.5}), we obtain
\be\label{x319}
u'(x)=-\sqrt{\lm} (1-\e^{u(x)}),
\ee
which may be integrated to give us the implicit solution
\be
\frac{\e^u}{1-\e^u}=\frac{\e^{u_0}}{1-\e^{u_0}}\e^{-\sqrt{\lm}(x-x_0)},\quad u_0=u(x_0),\quad  x\in\bfR.
\ee
Or explicitly, we obtain the exact solution
\be\label{x321}
u(x)=\ln\left(\frac{\e^{u_0-\sqrt{\lm}(x-x_0)}}{1-\e^{u_0}+\e^{u_0-\sqrt{\lm}(x-x_0)}}\right),\quad x\in\bfR.
\ee
This solution which depends on two arbitrary parameters, $x_0$ and $u_0$, automatically satisfies the boundary condition $u(-\infty)=0, u(\infty)=-\infty$.

Thus, using the relation 
\be\label{x322}
\phi=\e^{\frac12 u},
\ee
$\phi_0=\phi(x_0)=\e^{\frac12 u_0}$, and (\ref{x321}), we obtain
\be\label{x323}
\phi(x)= \frac{\phi_0\e^{-\frac{\sqrt{\lm}}2(x-x_0)}}{\left(1-\phi_0^2+\phi_0^2\e^{-\sqrt{\lm}(x-x_0)}\right)^{\frac12}},\quad x\in\bfR,
\ee
which satisfies the boundary condition $\phi(-\infty)=1, \phi(\infty)=0$, linking the superconducting phase, or the Higgs phase as a result of
the spontaneously broken symmetry, with the normal phase, or the magnetic phase. Note that, at a first sight of (\ref{x323}), the solution depends on two
parameters $x_0\in\bfR$ and $\phi_0\in(0,1)$. However, due to translation invariance, the parameter $x_0$ is actually fixed by $\phi_0$. Thus the
full solution
family as given in (\ref{x323}) contains exactly one free parameter, $\phi_0\in(0,1)$.

Besides, for the domain-wall solution, the Hamiltonian density (\ref{x3.12b}) becomes
\be
{\cal H}=\frac{\kappa^2 (A')^2}{4\phi^2}+(\phi')^2+A^2\phi^2+\frac1{\kappa^2}\phi^2(1-\phi^2)^2,
\ee
which in view of the BPS equations (\ref{6.3}) may be reduced into the much simplified form
\be\label{x325}
{\cal H}=2(\phi')^2+\frac2{\kappa^2}\phi^2(1-\phi^2)^2.
\ee
With this and (\ref{x323}), we can directly evaluate the total energy of a domain wall soliton. Here, however, we may do so indirectly and straightforwardly as follows.

In fact, using (\ref{x325}), we have
\be\label{x326}
E=\int_{\bfR^2}{\cal H}\,\dd x=2\int_{\bfR}\left(\phi'+\frac1{\kappa}\phi(1-\phi^2)\right)^2\,\dd x+\frac1{\kappa}\int_{\bfR}((1-\phi^2)^2)'\,\dd x.
\ee
In view of (\ref{x319}) and (\ref{x322}), the first integral on the right-hand side of (\ref{x326}) vanishes; in view of the boundary condition on $\phi$, the second
integral on the right-hand side of (\ref{x326}) yields the value $\frac1{\kappa}$. Thus, we obtain the total energy of a domain-wall soliton connecting the 
superconducting or the Higgs  and the normal or the magnetic phases to be
\be\label{x327}
E=\frac1{\kappa},
\ee
which is independent of the parameters $x_0$ and $\phi_0$ of the solution expressed in (\ref{x323}).

In Figure \ref{F1}, we present the plots of $\phi, \phi'$, and the energy density $\cal H$ against the $x$ axis, with $\kappa=\frac13$, $x_0=0$, and $\phi_0=\frac12$. It is seen
that the phase transition is realized rapidly in a highly local region, $-1.2<x<1.2$, say. In fact, energetically, we have
\be
\int_{-1.2}^{1.2}{\cal H}\,\dd x=2.998492403,
\ee
which is amazingly close to the exact value given by (\ref{x327}), namely, $E=3$. Such a domain wall soliton will become more and more localized as $\kappa$ assumes smaller and smaller values. Here we omit the examples along this line.

\begin{figure}
\begin{center}
\includegraphics[height=6cm,width=7cm]{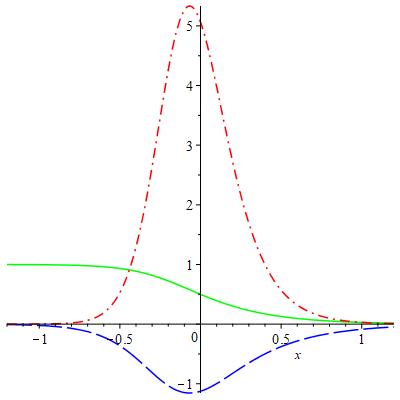}
\caption{Domain wall describing a phase transition between the superconducting and normal states represented by the boundary condition $\phi(-\infty)=1$
and $\phi(\infty)=0$, with $\kappa=\frac13$, $x_0=0$, and $\phi_0=\frac12$. The graphs of $\phi, \phi'$, and the energy density $\cal H$ are given by the solid, dash, and dash-dot  curves, respectively. It is
seen that $\cal H$ peaks at the spot where $\phi$ decreases most rapidly. }
\label{F1}
\end{center}
\end{figure}

Secondly, we aim to obtain a solution of (\ref{6.4}) satisfying the boundary condition $u(\pm\infty)=-\infty$. For such a solution, let $x_0$ be such that $u$ attains
its global maximum, say $u_0$, over $(-\infty,\infty)$. Then the maximum principle applied to (\ref{6.4}) implies $u_0<0$. Inserting $u'(x_0)=0$ into (\ref{6.5})
and integrating the resulting equation
\be\label{6.11}
u'=-\sqrt{\lambda}\sqrt{\e^{2u}-2\e^u+b},\quad x>x_0,
\ee
where $b=\e^{u_0}(2-\e^{u_0})>0$, we obtain
\be\label{6.12}
\int_{u_0}^{u(x)}\frac{\dd u}{\sqrt{\e^{2u}-2\e^u+b}}=-\sqrt{\lambda}(x-x_0).
\ee
Noting that $u(x)<u_0$ and rewriting the left-hand side of \eqref{6.12} as $I$, we have
\ber
I&=&-\int_{u_0}^{u(x)}\frac{\dd \e^{-u}}{\sqrt{b\e^{-2u}-2\e^{-u}+1}}\notag\\
&=&-\int_{v_0}^{v(x)}\frac{\dd v}{\sqrt{bv^2-2v+1}}\quad (v_0=\e^{-u_0},v=\e^{-u}, v(x)=\e^{-u(x)})\notag\\
&=&-\frac{1}{\sqrt{b}}\int_{v_0}^{v(x)}\frac{\dd v}{\sqrt{(v-v_1)(v-v_2)}}\quad \left(v_1=v_0=\e^{-u_0}, v_2=\frac1{2-\e^{u_0}}\right)\notag\\
&=&-\frac{1}{\sqrt{b}}\int_{v_0}^{v(x)}\frac{\dd v}{(v-v_1)\sqrt{\frac{v-v_2}{v-v_1}}},\label{6.13}
\eer
where we note that $v(x)>v_1>v_2$.
Now set $\sqrt{\frac{v-v_2}{v-v_1}}=w$. Then $w>1$ and
$v=\frac{v_2-v_1w^2}{1-w^2}$. Thus, with $\dd v=\frac{2(v_2-v_1)w}{(1-w^2)^2}\dd w$ and $v-v_1=\frac{v_2-v_1}{1-w^2}$,  we have
\ber
\int\frac{\dd v}{(v-v_1)\sqrt{\frac{v-v_2}{v-v_1}}}&=&2\int\frac{\dd w}{1-w^2}=\ln\frac{w+1}{w-1}\notag\\
&=&2\ln(\sqrt{v-v_1}+\sqrt{v-v_2})-\ln(v_1-v_2),\label{6.14}
\eer
Inserting \eqref{6.14} into \eqref{6.11}, we get
\be\label{6.16}
\sqrt{v-v_1}+\sqrt{v-v_2}=\sqrt{v_1-v_2}\,\e^{\frac{1}{2}\sqrt{\lambda b}\,(x-x_0)},\quad x>x_0,
\ee
which also gives  us
\be\label{6.15}
\sqrt{v-v_1}-\sqrt{v-v_2}=-\sqrt{v_1-v_2}\,\e^{-\frac12\sqrt{\lm b}\,(x-x_0)},\quad x>x_0.
\ee
From these equations, we obtain
\ber
v&=&\frac{(v_1+v_2)}2+\frac{(v_0-v_2)}2\cosh(\sqrt{\lambda b}\,(x-x_0))\notag\\
&=&\frac{1}{\e^{u_0}(2-\e^{u_0})}\left(1+(1-\e^{u_0})\cosh(\sqrt{\lambda b}\,(x-x_0))\right).\label{6.17}
\eer
Hence, returning to the original variable $u=-\ln v$, we arrive at the following expression for the solution
\be\label{6.18}
u(x)=u_0+\ln(2-\e^{u_0})-\ln\Big(1+(1-\e^{u_0})\cosh\big(\sqrt{\lambda \e^{u_0}(2-\e^{u_0})}\,(x-x_0)\big)\Big),\quad x>x_0.
\ee
The solution defined below $x_0$ is uniquely obtained from (\ref{6.18}) by an even-function extension, of course, due to the structure of the equation (\ref{6.4}).
However, since the function given in (\ref{6.18}) is automatically even about $x_0$, we see that it actually defines $u(x)$ for all $x\in\bfR$.

In summary, we have explicitly obtained the complete family of solutions of the Chern--Simons domain wall equation (\ref{6.4}) subject to the boundary condition $u(\pm\infty)=-\infty$. These solutions are uniquely determined by two arbitrarily prescribed parameters, $-\infty<x_0<\infty$ and $u_0<0$, where $x_0$
is the point where $u$ attains its global maximum $u_0$, about which the solution is an even function, which is given by the expression (\ref{6.18}) for $x>x_0$.
In particular, the solution enjoys the sharp asymptotic behavior
\be\label{6.19}
u(x)=\mp\sqrt{\lambda \e^{u_0}(2-\e^{u_0})}\,x+\mbox{O}(1),\quad x\rightarrow\pm\infty.
\ee
Since $\e^{u_0}(2-\e^{u_0})\to0$ as $u_0\to-\infty$ and $\e^{u_0}(2-\e^{u_0})\to 1$ as $u_0\to0$, we see that, up to a shift, the solution is uniquely determined by its asymptotes of the form
\be\label{6.20}
u(x)=\mp \sqrt{\lm} \vep \,x+\mbox{O}(1),\quad x\to\pm\infty,
\ee
where $\vep$ may assume any value in the unit interval $(0,1)$. Note also that, since $\e^{u_0}(2-\e^{u_0})$ increases for $u_0<0$, we see that a solution
of higher maximum decays faster asymptotically, which is interesting. In fact, explicitly, for any $\vep\in(0,1)$
given in (\ref{6.20}), we may solve $\e^{u_0}(2-\e^{u_0})=\vep^2$ to get
the prescribed maximum $u_0<0$ of $u(x)$ to be
\be
u_0=\ln\left(1-\sqrt{1-\vep^2}\right).
\ee

From (\ref{x322}) and (\ref{6.18}), we may return to the original variable to get
\be\label{x3.39}
\phi(x)=\frac{\phi_0 (2-\phi_0)^{\frac12}}{\left(1+(1-\phi_0)\cosh\sqrt{\lm \phi_0(2-\phi_0)}(x-x_0)\right)^{\frac12}},\quad x\in\bfR,\quad \phi(x_0)=\phi_0\in (0,1).
\ee
Inserting this into (\ref{x325}), we may evaluate the total energy $E=\int_{\bfR}{\cal H}\,\dd x$ to obtain a closed-form expression $E=E(\phi_0)$, independent
of $x_0$ but dependent on $\phi_0$, which is always finite but too complicated to present here. Nevertheless, it is still possible to reduce the amount of computation
if we use (\ref{x326}) instead of  evaluating (\ref{x325}) directly. In so doing, we may insert (\ref{x3.39}) into (\ref{x326}),  with $x_0=0$ by translation
invariance, and rewrite (\ref{x326}) as
\be\label{x3.40}
E(\phi_0)=4\int_0^\infty \left(\phi'+\frac1{\kappa}\phi(1-\phi^2)\right)^2\,\dd x+\frac2{\kappa}\left(1-(1-\phi^2_0)^2\right).
\ee
The integral on the right-hand side of (\ref{x3.40}) does not vanish but can be integrated to render an elementary function of $\phi_0$, which we omit.

In Fugure \ref{F2}, we present the profiles of a solution with $\kappa=1$ and $\phi_0=\frac12$. Since $\phi(\pm\infty)=0$, the solution represents
an instanton-like lump \cite{R} rather than a domain wall whose quantum-mechanical meaning is of separate interest
(specifically, the instanton interpretation comes up handy when the spatial coordinate $x$ is regarded to be an imaginarized time coordinate via a Wick rotation \cite{Wick}). The solution is localized. In fact,
the integral part of (\ref{x3.40}) has the value $1.180053251$ and the same evaluated over the truncated interval $(0,6)$ has the value $1.179867364$.
Furthermore, it may be seen that
more localized profiles result from smaller values of $\kappa$ as before. We omit the examples.

Unlike the domain wall soliton linking the superconducting and normal phases realized by the boundary condition $\phi(-\infty)=1, \phi(\infty)=0$, in the present
instanton-like situation, $\phi(\pm)=0$, the total energy of the solution depends on $\phi_0\in(0,1)$. In Figure \ref{F3}, we display the plots of $E(\phi_0)$ for $\kappa=1,2,4$ against $\phi_0\in(0,1)$. We see that $E(\phi_0)$ decreases when $\kappa$ increases and increases when $\phi_0$ increases. These results are
in sharp contrast against the energy identity (\ref{x327}) established for the domain wall solitons in the previous situation.

\begin{figure}
\begin{center}
\includegraphics[height=6cm,width=7cm]{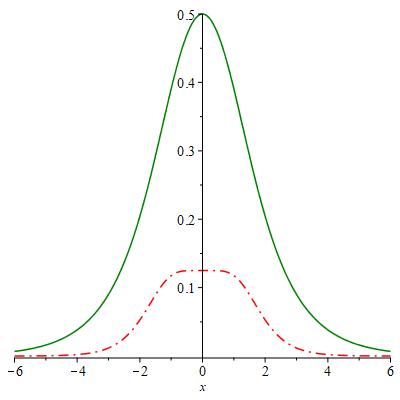}
\caption{An instanton-like solution of the domain-wall equation subject to the boundary condition $\phi(\pm\infty)=0$ with $\kappa=1$ and $\phi_0=\frac12$. The graphs of $\phi$ and its energy
density $\cal H$ are depicted by solid and dash-dot curves, respectively. Both $\phi$ and $\cal H$ are highly localized.}
\label{F2}
\end{center}
\end{figure}

\begin{figure}
\begin{center}
\includegraphics[height=7cm,width=7cm]{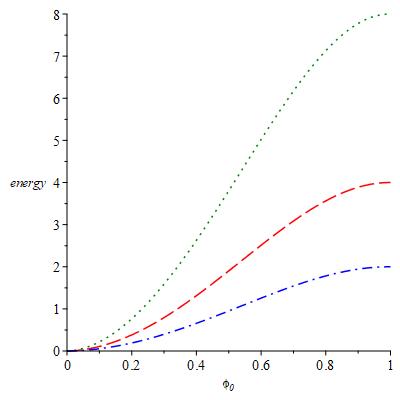}
\caption{Dependence of the energy of the solution on $\kappa$ and $\phi_0$. The dot, dash, and dash-dot curves correspond to $\kappa=1,2,4$,
respectively. It is seen that the energy is an increasing function of $\phi_0$ and greater values of $\kappa$ give rise to smaller values of the energy. }
\label{F3}
\end{center}
\end{figure}

In conclusion, we summarize our study of this section on the domain wall solitons arising in nonrelativistic and relativistic Abelian Chern--Simons--Higgs
theories defined by (\ref{x3.1}) and (\ref{x3.10}), respectively, as follows.

\begin{theorem}
The solutions of the BPS domain-wall equations (\ref{x3.5b}) and (\ref{6.3}) are governed by the one-dimensional Liouville type equations
(\ref{x3.6}) and (\ref{6.4}), respectively, which are both integrable.

\begin{enumerate}
\item[(i)] For any $\kappa\neq0$, the full sets of solutions of (\ref{x3.6}) are given by the explicit formulas (\ref{x3.8}) and (\ref{x3.9}), for $\kappa>0$
and $\kappa<0$, respectively, where the two parameters $x_0$ and $u_0$ are arbitrary. In particular, unlike the classical planar Liouville equation whose
 solutions when $\kappa<0$ break down outside some local regions, the one-dimensional Liouville equation does not suffer from such a breakdown property
and all solutions are globally defined.

\item[(ii)] For any $\lm>0$, the full set of solutions of (\ref{6.4}) satisfying the boundary condition $u(-\infty)=0$ and $u(\infty)=-\infty$ is given by (\ref{x321})
where $x_0\in\bfR$ and $u_0<0$. The total energy of such a solution is independent of the parameters $x_0,u_0$ but only depends on the coupling
parameter $\lm$. The parameters $x_0$ and $u_0$ are not free parameters but determine each other.

\item[(iii)] For any $\lm>0$, the solutions of (\ref{6.4}) satisfying the boundary condition $u(\pm\infty)=-\infty$ are all even functions about
$x=x_0\in\bfR$ given by (\ref{6.18}) for $x>x_0$ so that each of them achieves its prescribed global maximum $u_0\in (-\infty,0)$ at $x_0$. The total energy of such a
solution depends on its global maximum $u_0$ as well but is independent of $x_0$ due to translation invariance. The parameters $x_0$ and $u_0$ are free parameters.
\end{enumerate}
\end{theorem}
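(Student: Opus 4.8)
The plan is to organize the argument around the first integrals already obtained above, treating the three assertions as three separate phase-plane analyses and reading off completeness and the parameter count from the structure of each first-order reduction. For the integrability claim and part (i), I would recall that $u=\ln\psi^2$ and $u=2\ln\phi$ convert (\ref{x3.5b}) and (\ref{6.3}) into the autonomous equations (\ref{x3.6}) and (\ref{6.4}); multiplying each by $u'$ and integrating once yields an exact first integral, which is precisely the sense of ``integrable'' intended here. For (\ref{x3.6}) with $\kappa>0$ the right-hand side forces $u''<0$, so every solution is strictly concave and has at most one critical point; a global maximum $u_0$ is attained at some $x_0$, and since the first integral determines $u'$ as a function of $u$ once $u_0$ is fixed, the Cauchy data $(u(x_0),u'(x_0))=(u_0,0)$ are prescribed and ODE uniqueness identifies the solution with the closed form (\ref{x3.8}). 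The case $\kappa<0$ is identical with ``maximum'' replaced by ``minimum,'' giving (\ref{x3.9}). Global definedness is immediate: the arguments of the logarithms in (\ref{x3.8}) and (\ref{x3.9}) are entire and bounded below by a positive constant, so no finite-$x$ blow-up can occur, in contrast with the planar case.

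For part (ii) the crucial observation is that $u(-\infty)=0$ forces $u'(-\infty)=0$, which fixes the integration constant in (\ref{6.5}) and collapses the second-order equation to the single first-order autonomous equation (\ref{x319}). A first-order autonomous ODE has a solution unique up to translation once one value is prescribed, so the whole solution set is a single translation orbit; consequently the data $(x_0,u_0)$ cannot be chosen independently—specifying $u_0=u(x_0)$ pins down $x_0$ modulo the overall shift—and the explicit representative is (\ref{x321}). The energy is then read off not by substituting (\ref{x321}) into (\ref{x325}) but through the Bogomol'nyi-type rearrangement (\ref{x326}): the squared term vanishes identically by (\ref{x319}) and (\ref{x322}), while the remaining total-derivative term evaluates on the boundary data to $\frac1\kappa$, producing (\ref{x327}) with no dependence on $x_0$ or $u_0$.

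For part (iii) the condition $u(\pm\infty)=-\infty$ no longer fixes the constant in (\ref{6.5}); instead I would locate the global maximum $u_0$ at $x_0$, where the maximum principle applied to (\ref{6.4}) (so that $u''(x_0)\le0$) forces $\e^{u_0}\le1$, and the requirement of two-sided decay excludes the degenerate value $u_0=0$—which reduces to the monotone profile of part (ii)—leaving $u_0\in(-\infty,0)$. Inserting $u'(x_0)=0$ into (\ref{6.5}) gives (\ref{6.11}), and the substitution chain ending in (\ref{6.17})--(\ref{6.18}) integrates it explicitly. Evenness about $x_0$ follows because $x\mapsto 2x_0-x$ sends the solution to one with identical Cauchy data at $x_0$, so uniqueness forces the reflection symmetry. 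Here $x_0\in\bfR$ and $u_0\in(-\infty,0)$ are genuinely independent, the corresponding solutions being pairwise unrelated by translation; and inserting (\ref{x3.39}) into the rearranged energy (\ref{x3.40}) shows that the surviving boundary term depends on $u_0$ through $1-(1-\phi_0^2)^2$ while translation invariance removes all $x_0$-dependence.

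The main obstacle I anticipate is not any individual computation—these are all carried out in the preceding text—but the careful justification of \emph{completeness} together with the \emph{parameter count}. One must argue that the closed forms exhaust all solutions, which rests on isolating the correct first integral and then invoking ODE uniqueness, and, more delicately, one must explain why the constants $(x_0,u_0)$ collapse to a one-parameter orbit in case (ii) yet range over a genuine two-parameter family in case (iii). This dichotomy hinges entirely on whether the prescribed boundary behavior fixes the integration constant in (\ref{6.5}) before, or independently of, the choice of the extremal value $u_0$, and making that distinction precise is where the most care is required.
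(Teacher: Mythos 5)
Your proposal tracks the paper's own derivation essentially step for step: the same first integrals obtained by multiplying by $u'$, the same quadratures leading to (\ref{x3.8}), (\ref{x321}) and (\ref{6.18}), the same reflection-plus-uniqueness argument for evenness in (iii), and the same Bogomol'nyi rearrangement (\ref{x326})--(\ref{x3.40}) for the energies; for the $\kappa>0$ half of (i) and for parts (ii) and (iii) there is nothing substantive to add.

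The one step that would fail is the sentence ``the case $\kappa<0$ is identical with `maximum' replaced by `minimum,' giving (\ref{x3.9}).'' It is not identical. For $\kappa<0$ the first integral is $(u')^2=\frac{4}{|\kappa|}(\e^u-\e^{u_0})$, and the quadrature $\int_{u_0}^{u}(\e^v-\e^{u_0})^{-1/2}\,\dd v$ produces an inverse \emph{tangent}, not an inverse hyperbolic tangent: the solution through the Cauchy data $(u(x_0),u'(x_0))=(u_0,0)$ is $u=u_0-2\ln\cos\bigl(\e^{u_0/2}|\kappa|^{-1/2}(x-x_0)\bigr)$, which blows up at finite $x$. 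The closed form (\ref{x3.9}) that you (and the paper) quote does not solve (\ref{x3.6}): it is bounded between $u_0$ and $u_0+\ln 2$, whereas any global solution of $u''=\frac{2}{|\kappa|}\e^u$ that stays above $u_0$ has $u''\geq\frac{2}{|\kappa|}\e^{u_0}>0$ everywhere and therefore cannot be bounded above; equivalently, along (\ref{x3.9}) one has $u''\to0$ as $x\to\pm\infty$ while the right-hand side of the equation tends to $\frac{2}{|\kappa|}\e^{u_0+\ln 2}\neq0$. Consequently your global-definedness argument for (\ref{x3.9}) (``the argument of the logarithm is bounded below'') is vacuous, and the correct conclusion for $\kappa<0$ is the opposite of the one claimed: every nontrivial solution escapes to $+\infty$ at finite $x$, just as in the planar case. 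Your own stated verification plan --- fix the Cauchy data at the critical point and identify the closed form by ODE uniqueness together with a direct check --- is the right one; executing it instead of arguing by analogy is precisely what would have exposed this.
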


Of course, the explicit knowledge regarding the domain wall solitons arising in the Abelian gauge field theory acquired in the last and present sections is useful for us to obtain 
exact information on local and global physical quantities in the models. For example, for the Abelian Chern--Simons--Higgs theory defined by the Lagrangian density
(\ref{x3.10}), the magnetic field is $H=F_{12}$ and the electric charge density resulting from the Gauss law constraint is $\rho=2A_0|\phi|^2=\kappa F_{12}$ given
as the third equation in (\ref{x3.12}). Within the domain wall framework, these fields are reduced into
\be
H=A'=\frac2{\kappa^2}\phi^2(1-\phi^2),\quad \rho=\kappa A'=\frac2\kappa \phi^2(1-\phi^2),
\ee
as given in (\ref{6.3}). Thus, applying (\ref{x323}), both $H$ and $\rho$ are explicitly obtained for the domain wall soliton realizing the phase transition from the
Higgs domain (at $x=-\infty$) to the magnetic domain (at $x=\infty$). In view of $u=2\ln\phi$, $\lm=\frac4{\kappa^2}$, (\ref{6.4}), and (\ref{x319}), we see that the total magnetic charge is
\be
Q_m=
\int_{-\infty}^\infty H\,\dd x=-\frac\lm2 \int_{-\infty}^\infty \e^u(\e^u-1)\,\dd x=-\frac12 (u'(\infty)-u'(-\infty))=\frac1\kappa,
\ee
and the total electric charge  is also determined to be $Q_e=\int_{-\infty}^\infty \rho\,\dd x=\kappa Q_m=1$.

Moreover, recall that for the Abelian Chern--Simons--Higgs BPS vortex equations, (\ref{x3.12}), over $\bfR^2$, the associated nonlinear elliptic equation describing
a distribution of vortices at the prescribed points $p_1,\dots,p_N$ is \cite{HKP,JW}
\be\label{x342}
\Delta u=\lm\e^u(\e^u-1)+4\pi\sum_{s=1}^N\delta_{p_s}(x),
\ee
which is known to be more difficult than its Abelian Higgs counterpart, (\ref{x2.4}), and also non-integrable \cite{Schiff}. The solutions satisfying $u=0$ and
$u=-\infty$ at infinity are referred to as topological and non-topological, respectively \cite{JLW,JP2,JPW}. Although existence of both kinds of solutions has
been established for a long time \cite{Chae,Chan,CHMcY,SYcs1,SYcs2}, a full description of such solutions is still elusive. Hopefully, our comprehensive and
complete knowledge about the solutions of the one-dimensional version of the equation, (\ref{6.4}), as summarized in (ii) and (iii) of the above theorem, offers useful new insight
into the solutions of (\ref{x342}).

\section{Domain walls in non-Abelian gauge field theory}
\setcounter{equation}{0}

We now follow Bolognesi {\em et al} \cite{Bol} to consider the simplest non-Abelian $U(2)$-gauge field 
\be
A_\mu=\frac{a_\mu}2 {\bf 1}+\frac{ A_\mu^a}2 \sigma^a
\ee
interacting with a Higgs field $q$ given as a $2\times2$ complex matrix for which the gauge-covariant derivative is defined by
\be
D_\mu =\pa_\mu-\ii \frac{e a_\mu}2{\bf1}-\ii \frac{g A_\mu^a}2\sigma^a,\quad e,g>0.
\ee
Since the Pauli matrices $\sigma^a$ ($a=1,2,3$) obey the commutator relation $\left[{\sigma^a},{\sigma^b}\right]=2\ii \epsilon^{abc}{\sigma^c}$, the field strength tenors or the curvatures may be deduced from the commutator
\be
(D_\mu D_\nu-D_\nu D_\mu)q=-\ii\frac{e f_{\mu\nu}}2 q -\ii\frac{g F^a_{\mu\nu}}2 \sigma^a q,
\ee
where
\be
f_{\mu\nu}=\pa_\mu a_\nu-\pa_\nu a_\mu,\quad F^a_{\mu\nu}=\pa_\mu A^a_\nu -\pa_\nu A^a_\mu-\ii\frac{g}2 \epsilon^{abc}A^b_\mu A^c_\nu.
\ee
The coupled $U(2)$ gauge field and Higgs particle theory is defined by the BPS Lagrangian action density \cite{Bol}
\ber\label{2.1}
\mathcal{L}=-\frac{1}{4}f_{\mu \nu}f^{\mu \nu}-\frac{1}{4}F_{\mu \nu}^aF^{\mu \nu a}+\text{Tr}(D_\mu q)^\dag(D^\mu q)-\frac{e^2}{8}(|q|^2-2\xi)^2-\frac{g^2}{8}\sum_a\text{Tr}(q^\dag\sigma^aq)^2,
\eer
where $|q|^2=\text{Tr}(qq^\dag)$. The equations of motion of (\ref{2.1}) are complicated but in static two dimensions they possess the BPS reduction \cite{Bol}
\be\label{2.2}
f_{12}+\frac{e}{2}(|q|^2-2\xi)=0,\quad
F_{12}^a+\frac{g}{2}\text{Tr}(q^\dag\sigma^aq)=0,\quad
D_1q+\ii D_2 q=0,
\ee
as in the Abelian Higgs theory \cite{B,JT}. Moreover, as in \cite{Bol}, the system (\ref{2.2}) is seen to possess
a domain wall structure interpolating the Higgs and the magnetic phases given by the ansatz
\ber\label{2.5}
&&a_1=0,\quad a_2=a(x),\quad A_i^1=A_i^2=0,\quad i=1,2,\nn\\
&& A_1^2=0,\quad  A_2^3=A(x),\quad
q=\mbox{diag}\{q_1(x),q_2(x)\}=\mbox{real},
\eer
(note that here and in the sequel the function $a(x)$ should not be confused with the index $a$ used to  label the Pauli matrices) which renders (\ref{2.2}) into
\ber\label{2.6}
&& a'+\frac{e}{2}(q_1^2+q_2^2-2\xi)=0,\quad
A'+\frac{g}{2}(q_1^2-q_2^2)=0,\nn\\
&&q_1'+\left(\frac{e}{2}a+\frac{g}{2}A\right)q_1=0,\quad
q_2'+\left(\frac{e}{2}a-\frac{g}{2}A\right)q_2=0.
\eer
Thus, with $\gamma=\frac{g^2}{e^2}$, the above system is reduced into the following coupled second-order equations in terms of $q_1,q_2>0$:
\ber 
(\ln q_1)''&=&\frac{e^2}{4}\left((1+\gamma)q_1^2+(1-\gamma)q_2^2-2\xi\right),\label{2.10}\\
(\ln q_2)''&=&\frac{e^2}{4}\left((1-\gamma)q_1^2+(1+\gamma)q_2^2-2\xi\right).\label{2.11}
\eer
Furthermore, setting $u_i= \ln q_i^2-\ln\xi, i=1, 2$, and $\lm=\frac12 e^2\xi$, then the equations \eqref{2.10}--\eqref{2.11} become
\ber 
u_1''&=&\lm\left((1+\gamma)\text e^{u_1}+(1-\gamma)\text e^{u_2}-2\right),\label{2.14}\\
u_2''&=&\lm\left((1-\gamma)\text e^{u_1}+(1+\gamma)\text e^{u_2}-2\right).\label{2.15}
\eer
The following boundary conditions are of interest:
\ber
&& u_1(\pm\infty)=u_2(\pm\infty)=0\quad \mbox{(Higgs to Higgs phase)},\label{4.16a}\\
&& u_1(-\infty)=u_2(-\infty)=0,\quad u_1(\infty)=u_2(\infty)=-\infty\quad \mbox{(Higgs to magnetic phase)},\quad\quad \label{4.16b}\\
&&u_1(\pm\infty)=u_2(\pm\infty)=-\infty\quad\mbox{(magnetic to magnetic phase)}.\label{4.16c}
\eer

We first consider (\ref{2.14})--(\ref{2.15}) subject to the boundary condition (\ref{4.16a}) and show that there is no nontrivial solution. In fact, setting
$
U_1=\frac{1}{2}(u_1+u_2),U_2=\frac{1}{2}(u_1-u_2),
$
Then the equations \eqref{2.14}--\eqref{2.15} become
\ber\label{2.17}
U_1''=\lm(\e^{U_1+U_2}+\e^{U_1-U_2}-2),\quad U_2''=\lm\gamma\e^{U_1}(\e^{U_2}-\e^{-U_2}).
\eer
Note that the second equation in \eqref{2.17} gives us
$
U_2''=\lm\gamma\text e^{U_1}(\text e^V+\text e^{-V})U_2
$
where $V$ lies between $0$ and $U_2$. Thus we have $U_2= 0$ in view of (\ref{4.16a}) and the maximum principle. Inserting this result into the first equation
in (\ref{2.17}) and using the same argument, we deduce $U_1= 0$. Consequently, the only solution of (\ref{2.14})--(\ref{2.15}) subject to the boundary condition
(\ref{4.16a}) is the zero solution.

We next consider the boundary conditions (\ref{4.16b}) and (\ref{4.16c}). The structure of the equations (\ref{2.14})--(\ref{2.15}) indicates that it is consistent to take the ansatz
$u_1=u_2=u$ which reduces the equations into a single one, $u''=2\lm (\e^u-1)$, which is well studied in Section 2. There we have seen that the solutions
satisfying the boundary conditions $u(-\infty)=0, u(\infty)=-\infty$ and $u(\pm\infty)=-\infty$, respectively, may all be constructed and described. In other words, we have 
established the existence of  solutions of (\ref{2.14})--(\ref{2.15}) satisfying the boundary conditions (\ref{4.16b}) and (\ref{4.16c}) through using a single equation. Naturally this raises a question whether all solutions of (\ref{2.14})--(\ref{2.15}) are given by the ansatz $u_1=u_2$ since these equations are invariant
when one interchanges $u_1$ and $u_2$. In \cite{Bol}, a numerical example is presented showing that the equations (\ref{2.14})--(\ref{2.15}) possess a solution
with $u_1\neq u_2$. In fact, it is such that $u_1(-\infty)=u_2(-\infty)=0$ (realizing the Higgs phase) and $u_1(\infty)=-\infty,u_2(\infty)\equiv\overline{u}_2,-\infty<\overline{u}_2<\infty$ (realizing a mixed magnetic and Higgs phase). It is easy to see that, if such a solution exists, then consistency in (\ref{2.15}) implies
$\overline{u}_2=\ln\left( \frac2{1+\gamma}\right)$. In this part of the work, we shall prove that the equations (\ref{2.14})--(\ref{2.15}) do have solutions with
$u_1\neq u_2$. In order to do so, we shall consider the boundary condition (\ref{4.16c}). Regarding this, our existence result, phrased in terms of
the scalar fields $q_1,q_2$ as a solution to the original governing equations (\ref{2.10}) and (\ref{2.11}), is stated as follows.

\begin{theorem}\label{th4.1}
For any coupling parameters $e,\gamma,\xi>0$ in the domain wall equations (\ref{2.10}) and (\ref{2.11}) arising in the $U(2)$ non-Abelian Yang--Mills--Higgs theory defined by the action density
(\ref{2.1}), and any prescribed parameters $\alpha_1,\alpha_2,\beta_1,\beta_2$ satisfying the condition
\be\label{417}
(1+\gamma)(\alpha_1+\beta_1)+(\gamma-1)(\alpha_2+\beta_2)>0,\quad (\gamma-1)(\alpha_1+\beta_1)+(1+\gamma)(\alpha_2+\beta_2)>0,
\ee
there is a solution $(q_1,q_2)$ fulfilling the boundary condition $q_1(\pm\infty)=0,q_2(\pm\infty)=0$ with the sharp decay estimates
\bea
q^2_i(x)&=&\mbox{\rm O}(\e^{\alpha_i x-\frac12 e^2\xi x^2})\quad\mbox{as }x\to\infty,\quad i=1,2,\\
q_i^2(x)&=&\mbox{\rm O}(\e^{-\beta_i x-\frac12 e^2\xi x^2})\quad\mbox{as }x\to-\infty,\quad i=1,2.
\eea
Moreover, there hold the following exact results
\bea
\int_{\bfR} q_1^2(x)\,\dd x&=&\frac{1}{2e^2\gamma}\left((1+\gamma)(\alpha_1+\beta_1)+(\gamma-1)(\alpha_2+\beta_2)\right),\\
\int_{\bfR}q_2^2(x)\,\dd x&=&\frac{1}{2e^2\gamma}\left((\gamma-1)(\alpha_1+\beta_1)+(1+\gamma)(\alpha_2+\beta_2)\right).
\eea
For the given parameters, the above described solution is in fact unique, and the condition (\ref{417}) is also necessary for existence.
\end{theorem}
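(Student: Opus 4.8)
The plan is to recast the boundary value problem \eqref{2.14}--\eqref{2.15} under \eqref{4.16c} as a single convex variational problem whose Euler--Lagrange system is exactly the given equations, and to solve it by direct minimization over a weighted Sobolev space. First I would symmetrize the system: writing it in vector form as $u''=\lm M\mathbf f(u)-2\lm\mathbf 1$, with $\mathbf f(u)=(\e^{u_1},\e^{u_2})$, $\mathbf 1=(1,1)$ and $M=\begin{pmatrix}1+\gamma&1-\gamma\\1-\gamma&1+\gamma\end{pmatrix}$, I observe that $M$ is symmetric and positive definite (eigenvalues $2$ and $2\gamma$) and that $M^{-1}\mathbf 1=\tfrac12\mathbf 1$. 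Multiplying through by $M^{-1}$ therefore decouples the inhomogeneity and yields the symmetric, hence variational, form $\sum_j(M^{-1})_{ij}u_j''=\lm(\e^{u_i}-1)$, $i=1,2$.

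To absorb the forced quadratic growth $u_i\sim-\lm x^2$ at $\pm\infty$, I would fix smooth backgrounds $u_i^0$ equal to $-\lm x^2+\alpha_i x$ for $x$ large and to $-\lm x^2-\beta_i x$ for $x$ negative large, and set $u_i=u_i^0+v_i$. Since $(u_i^0)''=-2\lm$ outside a compact set and $M^{-1}\mathbf 1=\tfrac12\mathbf 1$, the functions $g_i:=\lm+\sum_j(M^{-1})_{ij}(u_j^0)''$ have compact support, and the transformed system becomes the Euler--Lagrange system of
\[
J(v)=\int_{\bfR}\Big(\tfrac12\langle M^{-1}v',v'\rangle+\lm\sum_i\e^{u_i^0}\e^{v_i}-\sum_i g_i v_i\Big)\,\dd x
\]
over pairs $v=(v_1,v_2)$ with $v_i'\in L^2(\bfR)$. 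Evaluating the convergent integral of the compactly supported $g_i$ through the boundary values $(u_j^0)'(\pm\infty)$ gives $\int_{\bfR}g_i\,\dd x=\sum_j(M^{-1})_{ij}(\alpha_j+\beta_j)$, which is precisely $\tfrac1{4\gamma}$ times the two quantities appearing in the condition \eqref{417}.

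I would then minimize $J$. The quadratic part is positive definite and each $\e^{v_i}$ is convex, so $J$ is strictly convex; making $J$ finite, $C^1$, and weakly lower semicontinuous requires a weighted Trudinger--Moser type inequality that bounds $\int\e^{u_i^0}\e^{v_i}$ in terms of $\|v_i'\|_{L^2}$ (the Gaussian weight $\e^{u_i^0}$ is what rescues integrability, since in one dimension $v_i'\in L^2$ permits $v_i=\mbox{O}(|x|^{1/2})$). The crux is coercivity: along a constant mode $v_i\equiv c_i$ the exponential term vanishes as $c_i\to-\infty$ while the linear term contributes $-c_i\int g_i$, so $J$ is coercive in that direction if and only if $\int_{\bfR}g_i\,\dd x>0$, i.e.\ exactly when \eqref{417} holds. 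Proving coercivity in full---splitting each $v_i$ into its weighted mean and a remainder, and balancing the Gaussian-weighted exponential control of the positive part against the linear control of the negative part across the $M^{-1}$ coupling---is the step I expect to be the main obstacle. Granting it, the direct method produces a minimizer, strict convexity makes it the unique critical point and hence the unique solution, and ODE regularity makes it classical.

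Finally I would read off the remaining assertions. Boundedness of the $v_i$ (from the space embedding, sharpened by a comparison/barrier analysis of the reduced equation near $\pm\infty$) gives $u_i=u_i^0+\mbox{O}(1)$, whence, through $q_i^2=\xi\e^{u_i}$ and $\lm=\tfrac12 e^2\xi$, the stated sharp decay $q_i^2=\mbox{O}(\e^{\alpha_i x-\frac12 e^2\xi x^2})$ as $x\to\infty$ and $q_i^2=\mbox{O}(\e^{-\beta_i x-\frac12 e^2\xi x^2})$ as $x\to-\infty$. Integrating each of \eqref{2.14}--\eqref{2.15} over $\bfR$ and using the asymptotics to evaluate $\int_{\bfR}(u_i''+2\lm)\,\dd x=\alpha_i+\beta_i$, then inverting $M$, produces the two closed-form values of $\int_{\bfR}q_i^2\,\dd x$. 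Since these integrals are strictly positive for the nontrivial solution, the two expressions in \eqref{417} must be positive, giving the necessity of \eqref{417}; uniqueness has already been secured by strict convexity.
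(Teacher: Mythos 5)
Your proposal is correct and follows essentially the same route as the paper: symmetrization by $\Gamma^{-1}$, subtraction of backgrounds $\alpha_i x$, $-\beta_i x$ plus the Gaussian profile $-\lm x^2$, direct minimization of the resulting convex functional over a weighted Sobolev space, constant test functions for the integral identities and necessity, and strict convexity for uniqueness. The coercivity step you flag as the main obstacle is handled in the paper exactly as you sketch it: decompose $\eta_i=\bar\eta_i+\dot\eta_i$ with respect to the weighted measure $\dd\mu$, control the zero-mean part by the Poincar\'e inequality in $\dot{\mathscr{H}}$, and use Jensen's inequality to bound the Gaussian-weighted exponential below by $K_i\e^{\bar\eta_i}$, which dominates the linear term $-\kappa_i\bar\eta_i$ precisely when (\ref{417}) holds.
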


As in the Abelian gauge field theory situation, these solutions represent localized lumps and may be interpreted as instantons.

\section{Analysis and proof of existence by  direct minimization}\label{s5}
\setcounter{equation}{0}

To proceed, we use $u_{01}$ and $u_{02}$ to denote two background functions which are smooth and satisfy
\be\label{4.1}
u_{0i}=\alpha_i x,\quad x\geq1,\quad u_{0i}=-\beta_i x,\quad x\leq -1,\quad i=1,2,
\ee
where $\alpha_i,\beta_i$ ($i=1,2$) are suitable parameters to be determined later.
Set $\omega=\lm x^2$ and $u_i=\eta_i +u_{0i}-\omega, i=1, 2$. Then equations \eqref{2.14}--\eqref{2.15} become
\ber
\eta_1''&=&\lm\Big((1+\gamma)\text e^{\eta_1+u_{01}-\omega}+(1-\gamma)\text e^{\eta_2+u_{02}-\omega}\Big)-u_{01}'',\label{4.2a}\\
\eta_2''&=&\lm\Big((1-\gamma)\text e^{\eta_1+u_{01}-\omega}+(1+\gamma)\text e^{\eta_2+u_{02}-\omega}\Big)-u_{02}'',\label{4.2b}
\eer
or their matrix form
\be\label{4.3a}
\Gamma^{-1}(\eta_1'', \eta_2'')^\tau=\lm\left(\text e^{\eta_1+u_{01}-\omega}, \text e^{\eta_2+u_{02}-\omega}\right)^\tau-\Gamma^{-1}(u_{01}'',  u_{02}'')^\tau,
\ee
where $\tau$ denotes matrix transpose and
\be
\Gamma=
\left(
\begin{array}{cc}
1+\gamma & 1-\gamma\\
1-\gamma & 1+\gamma
\end{array}
\right),
\quad
\Gamma^{-1}=\frac{1}{4\gamma}
\left(
\begin{array}{cc}
1+\gamma & \gamma-1\\
\gamma-1 & 1+\gamma
\end{array}
\right),
\ee
so that $2\gamma$ and $2$ are the two eigenvalues of $\Gamma$.
Thus we see that (\ref{4.2a})--(\ref{4.2b}) or (\ref{4.3a}) may be derived as the Euler--Lagrange equations of the energy functional
\begin{equation}\label{4.4}
I(\eta_1,\eta_2)=\int\frac{1}{2}(\eta_1',\eta_2')\Gamma^{-1}(\eta_1',\eta_2')^\tau +\lm\int\left(\text e^{\eta_1+u_{01}-\omega} +\e^{\eta_2+u_{02}-\omega}\right)-\frac1{4\gamma}J(\eta_1,\eta_2),
\ee
in which we have set
\be\label{J}
J(\eta_1,\eta_2)=\int\left\{\Big((1+\gamma)u_{01}''+(\gamma-1)u_{02}''\Big)\eta_1+\Big((\gamma -1)u_{01}''+(1+\gamma)u_{02}''\Big)\eta_2\right\},
\ee
where and in the sequel we omit the domain of integration, $\bfR$, and the Lebesgue measure, $\dd x$, when there is no risk of confusion.
Of course, $u_{01}''$ and $u_{02}''$ are of compact supports and satisfy
\ber
\int\Big((1+\gamma)u_{01}''+(\gamma-1)u_{02}''\Big)&=&\Big((1+\gamma)(\alpha_1+\beta_1)+(\gamma-1)(\alpha_2+\beta_2)\Big)
\equiv2\gamma{\kappa_1},\label{4.5}\\
\int\Big((\gamma-1)u_{01}''+(1+\gamma)u_{02}''\Big)&=&\Big((\gamma-1)(\alpha_1+\beta_1)+(1+\gamma)(\alpha_2+\beta_2)\Big)
\equiv2\gamma{\kappa_2}.\label{4.5b}
\eer
For technical reasons, we will need to impose some conditions on the ranges of the parameters $\alpha_i,\beta_i$ ($i=1,2$) in due course later.

To proceed further, we need to consider a suitable weighted Sobolev space formalism.

Let  $h_0(x)\in C^{\infty}(\mathbb{R})$ be a positive-valued weight function satisfying $h_0(x)=\text e^{-\beta|x|},|x|\geq1$, where $\beta>0$ is a constant.
Set $\dd\mu=h_0(x)\,\dd x$ (where $\mu$ should not be confused with the spacetime index used elsewhere in this paper) and use $L^p(\dd x)$ and $L^p(\dd\mu)$
to denote the usual $L^p$-spaces over $\bfR$ with respect to the measures $\dd x$ and $\dd\mu$, respectively.
Use $\mathscr{H}$ to denote the Hilbert space obtained by taking the completion of $C^\infty_0(\bfR)$ under the norm given by
\be\label{xx5.10}
\|u\|^2_\mathscr{H}=\|u'\|^2_{L^2(\text{d}x)}+\|u\|^2_{L^2(\text{d}\mu)}.
\ee
Then $\mathscr{H}$ contains all constant functions and thus $u \mapsto\int_{\mathbb{R}} u\,\text{d}\mu$ is a continuous linear functional on $\mathscr{H}$.
Consequently,
\be
\dot{\mathscr{H}}=\left\{u\in \mathscr{H}\,\Big|\int u\,\text{d}\mu=0\right\}
\ee
is a closed subspace of $\mathscr{H}$. Therefore we have for each $u\in \mathscr{H}$ the decomposition
\ber
u=\bar{u}+\dot{u}, ~~\bar{u}\in\bfR, ~~\dot{u}\in\dot{\mathscr{H}}\label{4.6}.
\eer

 First, we establish the following Trudinger--Moser inequality \cite{Aubin,M,Trudinger} on $\mathbb{R}$, specializing on the method in McOwen \cite{Mc}.

\begin{lemma}\label{4.1} Let $\beta>0$ be the exponent in the weight function $h_0$ in the definition of the weighted measure $\dd\mu$.
For any $a\in{\mathbb{R}}$ and $b\in(0, \beta)$, there is some constant $C(b)>0$, so that
\ber\label{x5.13}
\int \exp(a|v|)\,\dd\mu\leq C(b)\exp\Big(\frac{a^2}{4b}\int (v'(x))^2\,\dd x\Big), \quad \forall v\in\mathscr{H},\quad \int v\,\dd\mu=0.
\eer
\end{lemma}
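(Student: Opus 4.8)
The plan is to establish the weighted Trudinger--Moser inequality \eqref{x5.13} by reducing it, via the mean-value-zero constraint $\int v\,\dd\mu=0$, to a pointwise bound on $|v|$ in terms of the Dirichlet energy $\int(v')^2\,\dd x$, and then integrating the resulting exponential against the finite measure $\dd\mu$. First I would exploit the one-dimensionality of the problem: for a smooth compactly supported $v$ and any two points $x,y\in\bfR$, the fundamental theorem of calculus gives $v(x)-v(y)=\int_y^x v'(t)\,\dd t$, whence by Cauchy--Schwarz $|v(x)-v(y)|\le |x-y|^{1/2}\left(\int(v')^2\,\dd x\right)^{1/2}$. This is the crucial gain afforded by $n=1$: the Dirichlet energy already controls oscillation pointwise, unlike in higher dimensions where $H^1$ fails to embed into $L^\infty$. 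The obstacle is that this bound degenerates as $|x-y|\to\infty$, so a naive pointwise estimate of $|v(x)|$ would blow up; the role of the weight $h_0$ and the constraint $\int v\,\dd\mu=0$ is precisely to tame this linear-in-distance growth when it is fed into $\exp(a|v|)$ and integrated.

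Next I would use the constraint to anchor $v$. Since $\int v\,\dd\mu=0$ and $\mu$ is a finite positive measure (because $h_0(x)=\e^{-\beta|x|}$ for $|x|\ge1$ is integrable), there exists a point $x_*$ with $v(x_*)=0$, or more robustly, one can write $v(x)=v(x)-\frac{1}{\mu(\bfR)}\int v\,\dd\mu=\frac{1}{\mu(\bfR)}\int (v(x)-v(y))\,\dd\mu(y)$ and apply the oscillation bound inside the integral. This yields $|v(x)|\le \frac{1}{\mu(\bfR)}\int |x-y|^{1/2}\,\dd\mu(y)\cdot\left(\int(v')^2\,\dd x\right)^{1/2}$, which grows at most like $|x|^{1/2}$ for large $|x|$ with a coefficient proportional to the Dirichlet norm. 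Writing $D=\int(v')^2\,\dd x$, this gives a bound of the schematic form $|v(x)|\le \varphi(x)\sqrt{D}$ with $\varphi(x)=\mathrm{O}(|x|^{1/2})$.

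To reach the sharp constant $\frac{a^2}{4b}$ rather than merely some finite constant, a cruder $|v(x)|\le\varphi(x)\sqrt D$ estimate inserted into $\exp(a|v|)$ will not suffice, so I would instead follow McOwen's device of splitting $a|v(x)|\le \frac{a^2}{4b}D + b\,\varphi(x)^2$ via Young's inequality $a|v|\le \frac{a^2}{4b}\cdot(\text{energy factor})+b\cdot(\text{geometric factor})$, calibrating the split so that the energy-dependent part factors out exactly as $\exp\!\big(\frac{a^2}{4b}D\big)$ and the remaining geometric factor is integrable against $\dd\mu$. Concretely, after arranging $a|v(x)|\le \frac{a^2}{4b}D+b|x|+\mathrm{O}(1)$, I would integrate:
\be
\int \exp(a|v|)\,\dd\mu\le \exp\!\Big(\tfrac{a^2}{4b}D\Big)\int \e^{b|x|+\mathrm{O}(1)}\,h_0(x)\,\dd x.
\ee
The hypothesis $b\in(0,\beta)$ is exactly what makes the last integral converge, since $\e^{b|x|}h_0(x)=\e^{(b-\beta)|x|}$ decays for $|x|\ge1$; this finite value is the constant $C(b)$.

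The main obstacle I anticipate is the sharp bookkeeping of the constant: one must arrange the Young-inequality split so that the coefficient of $D$ is \emph{exactly} $\frac{a^2}{4b}$ while simultaneously keeping the geometric remainder linear in $|x|$ with slope below $\beta$. The $|x|^{1/2}$ versus $|x|$ discrepancy between the oscillation bound and the required exponential integrability must be reconciled carefully --- the square in Young's inequality converts the $\sqrt D$ and the $|x|^{1/2}$ together into terms linear in $|x|$ and linear in $D$, which is what makes the scheme close. Finally, I would extend the inequality from $C^\infty_0(\bfR)$ to all of $\mathscr H$ with mean zero by density in the $\mathscr H$-norm, noting that both sides are continuous under $\mathscr H$-convergence (the left by Fatou together with the uniform exponential-integrability the bound itself provides, the right by continuity of the Dirichlet norm).
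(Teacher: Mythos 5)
Your proposal is correct and follows essentially the same route as the paper: the one-dimensional Cauchy--Schwarz oscillation bound, anchoring via the mean-zero condition by averaging $v(x)-v(y)$ against $\dd\mu(y)$ to get $|v(x)|\le(|x|^{1/2}+C)\|v'\|_{L^2(\dd x)}$, the Young-inequality split $a|v|\le b\,(\cdot)^2+\frac{a^2}{4b}\int(v')^2$, and the choice $b(1+\varepsilon)<\beta$ to make the resulting weight integrable. The only cosmetic difference is that the paper first normalizes to unit Dirichlet energy and isolates the sublemma $\int\e^{bu^2}\dd\mu\le K$ before rescaling, whereas you substitute the pointwise bound directly; the content is identical.
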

\begin{proof}
Let $u\in L^2(\text{d}\mu)$ be such that $u'\in L^2(\text{d}x)$ and
\ber \label{4.7}
\int (u'(x))^2\text{d}x=1,\quad \int u(x)\text{d}\mu=0.
\eer
Then there is a constant $K>0$ independent of $u$ but dependent on $b$ such that
\ber \label{4.8}
\int\text e^{b u^2}\text{d}\mu\leq K,\quad \forall b<\beta.
\eer

In fact, we have for $x>y$ the estimate
\ber\notag
|u(x)-u(y)|\leq\int_{y}^{x}|u'(t)|\text{d}t\leq\Big(\int_{y}^{x}\text{d}t\Big)^{\frac{1}{2}}\Big(\int_{y}^{x}(u'(t))^2\text{d}t\Big)^{\frac{1}{2}}.
\eer
Hence
\be \label{4.10}
(u(x)-u(y))^2\leq(x-y)\int_{y}^{x}(u'(t))^2\text{d}t\leq(x-y),
\ee
in view of \eqref{4.7}. Using \eqref{4.10} we have
\ber \label{4.11}
-|x-y|^{\frac{1}{2}}\leq u(x)-u(y)\leq|x-y|^{\frac{1}{2}}, ~~\forall~x, y\in\mathbb{R}.
\eer
Now multiplying \eqref{4.11} by $h_0(y)$, integrating against the measure $\text{d}y$ over $\mathbb{R}$, and using the second condition in \eqref{4.7}, we have
\ber \label{4.12}
|u(x)|\int_{\mathbb{R}}\text{d}\mu\leq\int_{\mathbb{R}}|x-y|^{\frac{1}{2}}h_0(y)\text{d}y.
\eer

On the other hand, we have
\ber \label{4.13}
|x-y|^{\frac{1}{2}}\leq(|x|+|y|)^{\frac{1}{2}}\leq|x|^{\frac{1}{2}}+|y|^{\frac{1}{2}},
\eer
inserting \eqref{4.13} into \eqref{4.12} we get
\ber \label{4.14}
|u(x)|\int_{\mathbb{R}}\text{d}\mu\leq|x|^{\frac{1}{2}}\int_{\mathbb{R}}\text{d}\mu+\int_{\mathbb{R}}|y|^{\frac{1}{2}}h_0(y)\text{d}y.
\eer
That is,
\ber \label{4.15}
|u(x)|\leq|x|^{\frac{1}{2}}+\frac{1}{\int_{\mathbb{R}}\text{d}\mu}\int_{\mathbb{R}}|y|^{\frac{1}{2}}h_0(y)\text{d}y\equiv|x|^{\frac{1}{2}}+C,\quad \forall x.
\eer
Thus,  using the interpolation inequality $2ab\leq\varepsilon a^2+\frac{1}{\varepsilon}b^2$, we arrive at
\ber \label{4.16}
u^2(x)\leq(1+\varepsilon)|x|+C^2\left(1+\frac{1}{\varepsilon}\right), \quad \forall x\in\mathbb{R}, \quad \forall\varepsilon>0.
\eer

Now choose $\varepsilon>0$ small enough such that $b(1+\varepsilon)<\beta$. In view of this condition and \eqref{4.16}, we see there is a constant $K>0$ such that
(\ref{4.8}) holds.

We are now ready to establish the Trudinger--Moser inequality (\ref{x5.13}).
For this purpose, assume $v\in L^2(\text{d}\mu)$, so that $v\not\equiv0$ but $\int v\,\text{d}\mu=0$. Then
$
\int (v'(x))^2\,\text{d}x>0.
$
For such a function $v$, we set
$u=v/\Big(\int (v'(x))^2\text{d}x\Big)^\frac{1}{2}\equiv \frac{v}{C_v}.
$
Then $u$ satisfies \eqref{4.7}.
Finally, for any constant $a\in{\mathbb{R}}$, using
\ber \label{4.21}
a|v|=a|C_vu|\leq b u^2+\frac{1}{4b}a^2C^2_v,~~\forall b>0
\eer
and \eqref{4.8}, we have
\ber \label{4.22}
\int_{\mathbb{R}}\exp(a|v|)\text{d}\mu\leq K(b)\exp(\frac{a^2}{4b}C^2_v),~~0<b<\beta.
\eer
So we get the asserted Trudinger--Moser  inequality (\ref{x5.13}).
\end{proof}

Next, we show that the Poincar\'{e} inequality holds in our context.

\begin{lemma}\label{L2}
There is a constant $C>0$ so that
\be\label{x5.25}
\|v\|^2_{L^2(\dd\mu)}\leq C\|v'\|^2_{L^2(\dd x)},\quad v\in\dot{\mathscr{H}}.
\ee
\end{lemma}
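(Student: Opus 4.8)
The plan is to prove the Poincar\'e inequality (\ref{x5.25}) on the subspace $\dot{\mathscr{H}}$ by a compactness/contradiction argument, exploiting the fact that functions in $\dot{\mathscr{H}}$ have vanishing $\mu$-average, which removes the obstruction coming from constant functions. First I would suppose the inequality fails. Then there is a sequence $\{v_n\}\subset\dot{\mathscr{H}}$ with $\|v_n\|^2_{L^2(\dd\mu)}=1$ for all $n$ while $\|v_n'\|^2_{L^2(\dd x)}\to0$. By the normalization and the constraint $\int v_n\,\dd\mu=0$, the sequence is bounded in $\mathscr{H}$, so after passing to a subsequence we may assume $v_n\rightharpoonup v$ weakly in $\mathscr{H}$.

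The main work is to upgrade weak convergence to strong $L^2(\dd\mu)$ convergence so that the normalization passes to the limit. Here I would use the pointwise control established inside the proof of Lemma \ref{4.1}: from the estimate (\ref{4.15}), a bound of the form $|v_n(x)|\le C_1|x|^{1/2}+C_2$ holds with constants controlled by $\|v_n'\|_{L^2(\dd x)}$ and the $\mu$-average (which is zero here), so the sequence is uniformly dominated by an $L^2(\dd\mu)$ function since $|x|\,\e^{-\beta|x|}$ is integrable. Combined with local uniform (in fact H\"older, via (\ref{4.10})) equicontinuity coming from the uniform $H^1_{loc}$ bound, Arzel\`a--Ascoli gives locally uniform convergence $v_n\to v$ on compact sets, and the dominated convergence theorem then yields $v_n\to v$ strongly in $L^2(\dd\mu)$. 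Hence $\|v\|^2_{L^2(\dd\mu)}=1$ and, by the constraint being preserved under this convergence, $\int v\,\dd\mu=0$, so $v\in\dot{\mathscr{H}}$ and $v\neq0$.

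Finally, weak lower semicontinuity of the Dirichlet seminorm gives $\|v'\|^2_{L^2(\dd x)}\le\liminf\|v_n'\|^2_{L^2(\dd x)}=0$, so $v'\equiv0$ and $v$ is constant. A nonzero constant cannot satisfy $\int v\,\dd\mu=0$, which contradicts $v\in\dot{\mathscr{H}}$ with $\|v\|_{L^2(\dd\mu)}=1$. This contradiction establishes the existence of the desired constant $C$.

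The hard part will be justifying the strong $L^2(\dd\mu)$ convergence rigorously: the weight $h_0$ decays but the admissible functions may grow like $|x|^{1/2}$ at infinity, so one cannot simply invoke a standard Rellich compact-embedding theorem. The key is that the $a\ priori$ pointwise bound (\ref{4.15}) provides an $n$-independent dominating function, converting the problem into an application of dominated convergence once local uniform convergence is in hand; I would make sure the constant in (\ref{4.15}) is indeed uniform across the sequence, which it is because it depends only on the fixed weight $h_0$ and on $\|v_n'\|_{L^2(\dd x)}\le1$.
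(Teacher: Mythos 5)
Your argument is correct, but it takes a longer route than the paper, and in fact the key estimate you invoke already finishes the proof without any compactness. The paper's proof is direct: for $v\in\dot{\mathscr{H}}$ with $\|v'\|_{L^2(\dd x)}\neq 0$ it normalizes $u=v/\|v'\|_{L^2(\dd x)}$, applies the pointwise growth bound (\ref{4.16}) (equivalently (\ref{4.15})) established in the proof of the Trudinger--Moser lemma, namely $u^2(x)\leq(1+\varepsilon)|x|+C$, and integrates against $\dd\mu=h_0\,\dd x$, using that $|x|\,h_0(x)$ is integrable; undoing the normalization gives (\ref{x5.25}) at once. You use exactly the same essential ingredient --- the uniform domination $|v_n(x)|\leq\bigl(|x|^{1/2}+C\bigr)\|v_n'\|_{L^2(\dd x)}$ --- but wrap it in a contradiction/compactness scheme (weak convergence, H\"older equicontinuity via (\ref{4.10}), Arzel\`a--Ascoli, dominated convergence, weak lower semicontinuity). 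That scheme is sound: the constraint and the normalization do pass to the limit, and the limit being a nonzero constant with zero $\mu$-average is indeed contradictory. But note that once you have the dominating bound, squaring and integrating it against $\dd\mu$ yields $\|v_n\|^2_{L^2(\dd\mu)}\leq \|v_n'\|^2_{L^2(\dd x)}\int(|x|^{1/2}+C)^2\,\dd\mu\to 0$, which contradicts $\|v_n\|_{L^2(\dd\mu)}=1$ immediately --- and, applied to a single $v$, is precisely the inequality you set out to prove. So the compact-embedding machinery (which the paper reserves for Lemma \ref{4.3}) buys you nothing here; the direct integration is both shorter and gives an explicit constant $C=\int(|x|^{1/2}+C)^2\,\dd\mu$, whereas the contradiction argument only yields existence of some constant.
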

\begin{proof} For $v\in\dot{\mathscr{H}}$ and $\|v'\|_{L^2(\dd x)}\neq0$, set $u=v/{\|v'\|_{L^2(\dd x)}}$. Then (\ref{4.7}) holds and \eqref{4.16} gives us
\ber\label{4.23}
\|u\|^2_{L^2(\text{d}\mu)}=\int u^2 h_0(x)\text{d}x\leq\int (C_1|x|+C_2) h_0(x)\text{d}x\leq C,
\eer
independent of $u$ otherwise. In other words $\|v\|^2_{L^2(\text{d}\mu)}\leq C\|v'\|^2_{L^2(\text{d}x)}$ as stated.
\end{proof}

We then establish the following embedding property.

\begin{lemma}\label{4.3}
The injection  ${\mathscr{H}}\rightarrow L^2(\mathbb{R}, \text{d}\mu)$ is a compact embedding.
\end{lemma}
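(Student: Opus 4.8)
The plan is to combine local compactness, coming from the one-dimensional Rellich--Kondrachov theorem, with a uniform tail estimate supplied by the pointwise growth bound already obtained in the proof of Lemma \ref{4.1}. Let $\{u_n\}$ be a bounded sequence in $\mathscr{H}$, say $\|u_n\|_{\mathscr{H}}\leq M$; the goal is to extract a subsequence that is Cauchy in $L^2(\dd\mu)$.

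First I would treat the local behaviour. On any bounded interval $[-R,R]$ the weight $h_0$ is continuous and strictly positive, hence bounded between two positive constants there, so the $\mathscr{H}$-bound forces $\{u_n\}$ to be bounded in $H^1([-R,R])$ with respect to $\dd x$ (the $L^2(\dd x)$-norm over $[-R,R]$ being controlled by the $L^2(\dd\mu)$-norm). By the compact embedding $H^1([-R,R])\hookrightarrow L^2([-R,R],\dd x)$ and a diagonal argument over $R=1,2,\dots$, I would pass to a subsequence (still denoted $\{u_n\}$) converging in $L^2([-R,R],\dd x)$, and therefore in $L^2([-R,R],\dd\mu)$, for every $R$.

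The crux is the uniform tail estimate $\sup_n\int_{|x|>R}u_n^2\,\dd\mu\to0$ as $R\to\infty$. Here I would invoke the decomposition \eqref{4.6}, $u_n=\bar u_n+\dot u_n$, noting that continuity of the functional $u\mapsto\bar u$ gives $|\bar u_n|\leq CM$, while $\dot u_n'=u_n'$ yields $\|\dot u_n'\|_{L^2(\dd x)}\leq M$. For the constant part the tail equals $\bar u_n^2\int_{|x|>R}h_0\,\dd x$, which is controlled since $h_0(x)=\e^{-\beta|x|}$ is integrable on $|x|>R$ with integral tending to $0$. For the nonconstant part, applying \eqref{4.15}--\eqref{4.16} to $\dot u_n/\|\dot u_n'\|_{L^2(\dd x)}$ (which is admissible because $\dot u_n\in\dot{\mathscr{H}}$) gives $\dot u_n^2(x)\leq C\|\dot u_n'\|_{L^2(\dd x)}^2(|x|+1)$, so $\int_{|x|>R}\dot u_n^2\,\dd\mu\leq CM^2\int_{|x|>R}(|x|+1)\e^{-\beta|x|}\,\dd x$, whose right-hand side vanishes as $R\to\infty$ uniformly in $n$ since $(|x|+1)\e^{-\beta|x|}$ is integrable.

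Finally I would assemble the two ingredients by an $\epsilon/3$-type argument: given $\epsilon>0$, choose $R$ so that $\int_{|x|>R}(u_n-u_m)^2\,\dd\mu\leq 2\int_{|x|>R}(u_n^2+u_m^2)\,\dd\mu<\epsilon$ uniformly in $n,m$, and then use the $L^2([-R,R],\dd\mu)$-convergence to make the central contribution small for $n,m$ large; this shows $\{u_n\}$ is Cauchy in $L^2(\dd\mu)$, establishing the compactness of the embedding. I expect the tail estimate to be the main obstacle, as it is the step where the interplay between the at-most-$\sqrt{|x|}$ growth of functions in $\mathscr{H}$ and the exponential decay of the weight $h_0$ must be used quantitatively; the local compactness step is routine.
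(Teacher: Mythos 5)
Your proposal is correct and follows essentially the same strategy as the paper's proof: local compactness on $[-R,R]$ combined with a uniform tail estimate obtained from the pointwise growth bound $|u_n(x)|\leq C_1+C_2|x|^{1/2}$ played against the exponential decay of $h_0$. The only cosmetic difference is that you anchor this growth bound via the decomposition $u_n=\bar u_n+\dot u_n$ and the estimate \eqref{4.15}, whereas the paper anchors it at $u_n(0)$ using the local uniform convergence; both are valid and lead to the same conclusion.
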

\begin{proof} Let $\{u_n\}$ be a weakly convergent sequence in $\HH$ and $u\in \HH$ its weak limit. Then the standard compact embedding 
$W^{1,2}(-R,R)\to C[-R,R]$ for any $R>0$  implies that $u_n\to u$ in $C[-R,R]$ for any $R>0$. In particular, $\{u_n(0)\}$ is a bounded sequence. Thus
\be
|u_n(x)|\leq|u_n(0)|+\left|\int_0^x u'_n(y)\,\dd y\right|\leq C_1+|x|^{\frac12}\|u_n'\|_{L^2(\dd x)}\leq C_1+C_2 |x|^{\frac12},\quad n=1,2,\dots,
\ee
where $C_1,C_2>0$ are constants. Thus, we see that, for any $\vep>0$, there is some $R_\vep>0$, such that
\be\label{x5.28}
\left(\int_{-\infty}^{-R_\vep}+\int_{R_\vep}^{\infty}\right) (u_n^2+u^2)\,\dd\mu<\vep,\quad n=1,2,\dots.
\ee
Therefore, in view of (\ref{x5.28}),  we arrive at
\bea
\limsup_{n\to\infty}\|u_n-u\|^2_{L^2(\dd\mu)}&=&\lim_{n\to\infty}\int_{-R_\vep}^{R_\vep}(u_n-u)^2\,\dd\mu+
\limsup_{n\to\infty}\left(\int_{-\infty}^{-R_\vep}+\int_{R_\vep}^{\infty}\right) (u_n-u)^2\,\dd\mu\nn\\
&\leq& 2\vep.
\eea
Since $\vep>0$ is arbitrary, we have proved that $\|u_n-u\|_{L^2(\dd\mu)}\to0$ as $n\to\infty$.
\end{proof}
\medskip

We are now prepared to prove the existence of a solution to the domain wall equations (\ref{4.2a})--(\ref{4.2b}) by a direct minimization of the functional $I$
given as in (\ref{4.4}) over the weighted Sobolev space $\HH$. That is, we are to solve the optimization problem
\ber \label{4.25}
\min\left\{I(\eta_1, \eta_2)\,\Big|\, \eta_1, \eta_2\in\mathscr{H}\right\}.
\eer
 In view of the lemmas above, the functional $I$ is $C^1$ and weakly lower semicontinuous over $\HH$. Here we omit the discussion. Below we
focus on the establishment of the
coerciveness of $I$ over $\HH$. 

First, with the notation given in (\ref{4.6}), we see that for the functional $J$ defined in (\ref{J}) we have
\be\label{x531}
|J(\dot{\eta}_1,\dot{\eta}_2)|\leq \vep\left(\|\dot{\eta}'_1\|_{L^2(\dd x)}^2+\|\dot{\eta}'_2\|^2_{L^2(\dd x)}\right)+C(\vep),
\ee
by applying the Schwarz inequality and Lemma \ref{L2}, where $\vep>0$ is arbitrary and $C(\vep)$ a constant depending on $\vep$.

Next, using (\ref{4.5}) and (\ref{4.5b}), we have
\be\label{x532}
J(\bar{\eta}_1,\bar{\eta}_2)=2\gamma\left(\kappa_1\bar{\eta}_1+\kappa_2\bar{\eta}_2\right).
\ee

Furthermore, by Jensen's inequality, we have
\bea\label{x533}
\int \e^{\eta_i+u_{0i}-\omega}\,\dd x&=&\int\e^{\eta_i+u_{0i}-\omega-\ln h_0}\,\dd\mu\nn\\
&\geq&\int\dd\mu \exp\left(\frac1{\int\dd\mu}\int(\eta_i+u_{0i}-w-\ln h_0)\,\dd\mu\right)\nn\\
&=&K_i \e^{\bar{\eta}_i},\quad i=1,2,
\eea
where $K_i$ ($i=1,2$) are some positive constants and no summation convention is applied here over repeated indices.

Now, in view of (\ref{x531})--(\ref{x533}), we see that (\ref{4.4}) renders the lower estimate
\bea\label{x534}
I(\eta_1,\eta_2)&\geq&\frac14\left(\min\left\{\frac1\gamma,1\right\}-\frac{\vep}{\gamma}\right)\left(\|\dot{\eta}'_1\|^2_{L^2(\dd x)}+\|\dot{\eta}'_2\|^2_{L^2(\dd x)}\right)\nn\\
&&+\lm\left(K_1\e^{\bar{\eta}_1}+K_2\e^{\bar{\eta}_2}\right)-\frac12(\kappa_1\bar{\eta}_1+\kappa_2\bar{\eta}_2)-\frac{C(\vep)}{4\gamma}.
\eea
Choose $\vep<\gamma\min\{\frac1{\gamma},1\}=\min\{\gamma,1\}$. Then the right-hand side of (\ref{x534}) is bounded from below. Hence $I_0\equiv\inf\{I(\eta_1,\eta_2)\,|\,\eta_1,\eta_2\in\HH\}$ is well defined. Let $\{\eta^{(n)}_1,\eta^{(n)}_2\}\in\HH$ such that $I(\eta^{(n)}_1,\eta^{(n)}_2)\to I_0$ as $n\to\infty$. Then the structure of
the right-hand side of (\ref{x534}) indicates that $\{\|(\dot{\eta}^{(n)}_i)'\|^2_{L^2(\dd x)}\}$ and $\{\bar{\eta}^{(n)}_i\}$ ($i=1,2$) are bounded
sequences. Hence, without loss of generality, we may assume $\eta^{(n)}_i\to\eta_i$ weakly in $\HH$ ($i=1,2$) as $n\to\infty$. By the weak semicontinuity of $I$
over $\HH$, we obtain $I(\eta_1,\eta_2)=I_0$. That is, $(\eta_1,\eta_2)$ solves (\ref{4.25}). The strict convexicity of $I$ then implies that such a solution as
a critical point of $I$ in $\HH$ is unique.

Let $(\eta_1,\eta_2)$ be the solution of (\ref{4.25}) just obtained. Then we have
\bea\label{535}
&&\int (\zeta_1',\zeta_2')\Gamma^{-1}(\eta_1',\eta_2')^\tau +\lm\int\left(\text e^{\eta_1+u_{01}-\omega} \zeta_1+\e^{\eta_2+u_{02}-\omega}\zeta_2\right)\nn\\
&&-\frac1{4\gamma}
\int\left\{\Big((1+\gamma)u_{01}''+(\gamma-1)u_{02}''\Big)\zeta_1+\Big((\gamma -1)u_{01}''+(1+\gamma)u_{02}''\Big)\zeta_2\right\}\nn\\
&&=0,\quad \forall \zeta_1,\zeta_2\in\HH.
\eea
Thus, we may insert $\zeta_1\equiv1,\zeta_2\equiv0$ and $\zeta_1\equiv0,\zeta_2\equiv1$ in (\ref{535}) to get 
\bea
\int \text e^{\eta_1+u_{01}-\omega} &=&\frac1{4\lm\gamma}
\int\Big((1+\gamma)u_{01}''+(\gamma-1)u_{02}''\Big) =\frac{\kappa_1}{2\lm},\label{536}\\
\int \text e^{\eta_2+u_{02}-\omega} &=&\frac1{4\lm\gamma}
\int\Big((1-\gamma)u_{01}''+(\gamma+1)u_{02}''\Big)=\frac{\kappa_2}{2\lm},\label{537}
\eea
respectively. In particular, we see that the condition $\kappa_1,\kappa_2>0$ rises also as a necessary condition, as a consequence of these relations.

Since $\eta'_1,\eta'_2\in L^2(\dd x)$, we see that, in the sense of subsequences at least, there holds $\eta'_1(x),\eta'_2(x)\to0$ as $x\to\pm\infty$.
Now denote the right-hand sides of (\ref{4.2a}) and (\ref{4.2b}) by $f_1(x)$ and $f_2(x)$, respectively. Then 
\be\label{538}
f_i(x)=\mbox{o}(\e^{-|x|}) \mbox{ (say)} \quad \mbox{as }|x|\to\infty,\quad i=1,2.
\ee
Thus, using (\ref{538}) and $|\eta'_i(x)|=|\int_x^{\pm\infty} f_i(y)\,\dd y|$ ($i=1,2$), we see that $\eta_i'(x)$ vanishes at $\pm\infty$ as fast, which implies
$\eta_i(x)\to$ constants as $x\to \pm\infty$, $i=1,2$.

Summarizing and returning to the original variables $q_1,q_2$, all the statements made in Theorem \ref{th4.1} are established.

We note that the direct minimization method used here adapts that developed in \cite{LY} for solving a system of multivortex equations
over a doubly periodic domain arising in a
supersymmetric gauge field theory model. An additional difficulty encountered is that we need to deal with the full real line, $\bfR$, where our equations
are sitting over. Fortunately, such a difficulty may effectively be overcome by a weighted Sobolev space formalism, allowing the execution of the direct
minimization method.

\section{Electroweak domain walls in the Ambj{\o}rn--Olesen situation}\label{s6}
\setcounter{equation}{0}

In this section we consider domain wall solitons arising in the formalism of Ambj{\o}rn--Olesen \cite{AO2,AO3,AO4} of
the  classical electroweak theory of
Weinberg--Salam \cite{Lai} governing the $W$-boson condensed vortices with a BPS structure. In this theory, the Higgs field $\phi$ is a complex doublet in the fundamental
representation of
 $SU(2)\times U(1)$ which transforms $\phi$
according to the rule
\be
\phi\mapsto \exp(-\ii\omega_a t_a)\phi,\quad \omega_a\in \bfR,\quad a=1,2,3,\quad
\phi\mapsto \exp(-\ii\xi t_0)\phi,\quad\xi\in \bfR, 
\ee
where $t_a=\frac{\sigma_a}{2}$ ($a=1,2,3$) and $t_0=\frac12{\bf 1}$
is a generator of $U(1)$ in the above matrix representation.
The $SU(2)$ and $U(1)$ gauge fields are denoted, respectively, by
 $A_\mu=A^a_\mu t_a$
 and $B_\mu$. 
The field strength tensors and the $SU(2)\times U(1)$ gauge-covariant derivative
are defined by 
\bea
F_{\mu\nu}&=&\partial_\mu A_\nu-\partial_\nu A_\mu+\ii g[A_\mu,A_\nu],\quad
H_{\mu\nu}=\partial_\mu B_\nu-\partial_\nu B_\nu,\nn\\
D_\mu\phi&=&\partial_\mu\phi+\ii gA^a_\mu t_a\phi+\ii g'B_\mu t_0\phi,
\eea
where $g,g'>0$ are coupling constants.  
The Lagrangian density of the bosonic electroweak theory is then given as
\begin{equation}\label{f35}
{\cal L}=-\frac14(F^{a\mu\nu} F^a_{\mu\nu}+H^{\mu\nu}H_{\mu\nu})+(D^\mu\phi)^\dagger
\cdot(D_\mu\phi)-\Lambda(\vp_0^2-\phi^\dagger\phi)^2,
\end{equation}
where 
$\Lambda, \vp_0$ are positive parameters
with $\Lambda$ giving rise to the
Higgs particle mass and $\vp_0$ setting the energy scale of symmetry-breaking.
In such a context,
a pair of new vector fields $P_\mu$ and $Z_\mu$ arise resulting from
a rotation of the pair
 $A^3_\mu$ and
$B_\mu$,
\be
P_\mu=B_\mu\cos\theta+A^3_\mu\sin\theta,\quad
Z_\mu=-B_\mu\sin\theta+A^3_\mu\cos\theta. 
\ee
In terms of $P_\mu$ and $ Z_\mu$, the covariant derivative $D_\mu$ becomes
\be
D_\mu=
\partial_\mu+\ii g(A^1_\mu t_1+A^2_\mu t_2)
+\ii
P_\mu(g\sin\theta t_3+g'\cos\theta t_0)
+\ii Z_\mu(g\cos\theta t_3-g'\sin\theta t_0).
\ee
Requiring that the coefficient of $P_\mu$ be the charge operator $eQ=e(t_3+t_0)$ where
$-e$ is the charge of the electron, we obtain the relations
\be\label{f36}
e=g\sin\theta=g'\cos\theta
=\frac{gg'}{(g^2+g'^2)^{1/2}},\quad
\cos\theta=\frac{g}{(g^2+g'^2)^{1/2}},
\ee
which defines the
Weinberg mixing angle, $\theta$ ($\approx 30^\circ$).  Thus now $D_\mu$ takes the form
\be
D_\mu=\partial_\mu+\ii g(A^1_\mu t_1+A^2_\mu t_2)+\ii P_\mu eQ+\ii Z_\mu eQ',
\ee
where $Q'=\cot\theta t_3-\tan\theta t_0$ is the neutral charge operator.
From (\ref{f36}), when we impose the unitary gauge in which $\phi=(0,\vp)^\tau$ with
 $\vp$  a real scalar field, then
\be
D_\mu\phi=\left(
               \frac{\ii}2g(A^1_\mu-\ii A^2_\mu)\vp,
               \partial_\mu\vp-\frac {\ii g}{2\cos\theta}Z_\mu\vp\right)^\tau.
\ee
Define now the complex vector field
$
W_\mu=\frac1{\sqrt{2}}(A^1_\mu+\ii A^2_\mu)
$
and the covariant derivative ${\cal D}_\mu=\partial_\mu-\ii g A^3_\mu$.  
We see that, with the notation 
$
P_{\mu\nu}=\partial_\mu P_\nu-\partial_\nu P_\mu,
Z_{\mu\nu}=\partial_\mu Z_\nu-\partial_\nu Z_\mu,
$
 the Lagrangian (\ref{f35})
takes the form
\bea\label{f37}
{\cal L}&=&-\frac12\overline{(\D^\mu W^\nu-\D^\nu W^\mu)}(\D_\mu W_\nu-\D_\nu W_\mu)-\frac14
          Z^{\mu\nu}Z_{\mu\nu}-\frac14P^{\mu\nu}P_{\mu\nu}\nn\\
&&-\frac12g^2([W^\mu \overline{W}_\mu]^2-[W^\mu W_\mu]\overline{[W^\nu W_\nu]})\nn\\
          &&-\ii g(Z^{\mu\nu}\cos\theta+
P^{\mu\nu}\sin\theta)\overline{W}_\mu W_\nu\nn\\
&&+\frac12g^2\vp^2W^\mu \overline{W}_\mu+\partial^\mu\vp\partial_\mu\vp
          +\frac1{4\cos^2\theta}g^2\vp^2Z^\mu Z_\mu-\Lambda(\vp^2_0-\vp^2)^2.
\eea
Thus the theory is now  reformulated in the celebrated unitary gauge. The $W$ and $Z$ fields
 represent
two massive vector bosons which eliminate the curious massless
 Goldstone particles
in the original setting (\ref{f35}). These fields mediate short-range
(weak) interactions. The remaining massless gauge
(photon) field $P_\mu$ arising from the residual $U(1)$ symmetry mediates 
long-range (electromagnetic) interactions.
To proceed further, we assume that electroweak excitation is in the
third direction. 
Thus, we arrive at the vortex ansatz 
\be\label{f38}
A^a_0=A^a_3=B_0=B_3=0,\,
A^a_j=A^a_j(x^1,x^2),\,
B_j=B_j(x^1,x^2),\, j=1,2,\,
\phi=\phi(x^1,x^2). 
\ee
As a consequence, if the corresponding
$W_1$ and $W_2$ are represented by a complex scalar field   
$W$ according to $W_1=W, W_2=\ii W$
(this implies the relation
$A^1_2=-A^2_1\,,\,\,A^2_2=A^1_1$), 
the energy density associated with (\ref{f37})
takes the form \cite{AO2,AO3,AO4}
\bea\label{f39}
{\cal H}&=&
|\D_1W+\ii\D_2W|^2+\frac12P^2_{12}+
\frac12Z^2_{12}
-2g(Z_{12}\cos\theta+P_{12}\sin\theta)
                        |W|^2\nn\\
         &&+2g^2|W|^4+(\partial_j\vp)^2+\frac1{4\cos^2\theta}g^2\vp^2Z^2_j
+g^2\vp^2|W|^2
          +\Lambda (\vp^2_0-\vp^2)^2.
\eea
There is a residual $U(1)$ symmetry in the model which
may clearly be seen from the
 invariance of (\ref{f39})
under the gauge transformation   
\begin{equation}\label{f40}
W\mapsto\exp(\ii\zeta)W,\quad P_j\mapsto P_j+\frac1e\partial_j\zeta,\quad
 Z_j\mapsto Z_j,
\quad \vp\mapsto
\vp,  
\end{equation}
due to (\ref{f36}). The Euler--Lagrange equations of (\ref{f39}) are still complicated. In \cite{AO2,AO3,AO4}, it is shown that, there is a critical coupling
situation when
\be
\Lambda=\frac{g^2}{8\cos^2\theta},
\ee
there hold the BPS vortex equations 
\bea\label{xx5.1}
&&{\cal D}_1 W+\ii {\cal D}_2 W=0,\quad
P_{12}=\frac{g}{2\sin\theta}\varphi_0^2+2g\sin\theta|W|^2,\nn\\
&&Z_{12}=\frac g{2\cos\theta}(\varphi^2-\varphi_0^2)+2g\cos\theta|W|^2,\quad
Z_j=-\frac{2\cos\theta}g\epsilon_{jk}\pa_k\ln\varphi.
\eea
In \cite{SY1,SY2}, some existence theorems for the multivortex solutions of (\ref{xx5.1}) are obtained. See \cite{BM,BT,CL} for some further development.

We are  now ready to consider a domain-wall
substructure contained in the Ambj{\o}rn--Olesen theory. For this purpose, we take the consistent ansatz
in (\ref{xx5.1}) that $W=w$ is real-valued, all fields depend on $x^1=x$ only,  the $j=1$ components
of all the gauge fields vanish, and $P_2=P,Z_2=Z$. Thus we arrive at the reduced equations
\bea\label{xx5.3}
w'&=&-g(\sin\theta P+\cos\theta Z)w,\quad
\varphi'=\frac{g}{2\cos\theta}Z\varphi,\nn\\
P'
&=&\frac g{2\sin\theta} \varphi_0^2+2g\sin\theta w^2,\quad
Z'=\frac g{2\cos\theta}(\varphi^2-\varphi_0^2)+2g\cos\theta w^2.
\eea
The first two equations in (\ref{xx5.3}) indicate that the fields $w,\varphi$ will stay positive once they are positive, which enables us to assume $w,\varphi>0$. Therefore, we see that the system (\ref{xx5.3}) leads us to the second-order equations
\be\label{6.16}
(\ln w)''=-\frac{g^2}2\varphi^2-2g^2 w^2,\quad
(\ln \varphi)''=\frac{g^2}{4\cos^2\theta}(\varphi^2-\varphi_0^2)+g^2 w^2.
\ee

In view of the study of the electroweak vortices over $\bfR^2$ carried out in \cite{SY2} and the discussion of Sections 4--5, we impose
the boundary condition of
an ``instanton" lump such that $w,\vp=0$ at infinity.
For such solutions, we may state our main existence theorem as follows.

\begin{theorem} \label{th2}
For any coupling parameters $g,\vp_0>0, 0<\theta<\frac{\pi}2$ and any prescribed parameters $\alpha_1,\beta_1, \alpha_2<0,\beta_2<0$
satisfying
\be\label{xx6.17}
\alpha_1+\beta_1>0,\quad   |\alpha_2|+|\beta_2|>\frac{\alpha_1+\beta_1}{\tan^2\theta},\quad \min\{|\alpha_2|,|\beta_2|\}+\frac{\alpha_1+\beta_1}{\tan^2\theta}>(|\alpha_2|+|\beta_2|),
\ee
we can construct a solution $(w,\vp,P,Z)$ of (\ref{xx5.3}) through solving the equivalent system of the second-order equations (\ref{6.16}) subject to the boundary condition
$w(x),\vp(x)\to 0$ as $x\to\pm\infty$, with the fulfillment of the following precise properties:
\begin{enumerate}
\item[(i)] The functions $w$ and $\vp$ possess the sharp decay behavior
\bea
w^2(x)&=&\mbox{\rm O}\left(\e^{-|\alpha_2|x}\right),\quad x\to\infty,\quad  w^2(x)=\mbox{\rm O}\left(\e^{|\beta_2|x}\right),\quad x\to-\infty,\\
\vp^2(x)&=&\mbox{\rm O}\left(\exp\left\{-\frac{g^2\vp_0^2}{4\cos^2\theta}x^2+\frac12(\alpha_1+|\alpha_2|)x\right\}\right),\quad x\to\infty,\\
\vp^2(x)&=&\mbox{\rm O}\left(\exp\left\{-\frac{g^2\vp_0^2}{4\cos^2\theta}x^2-\frac12(\beta_1+|\beta_2|)x\right\}\right),\quad x\to-\infty.
\eea
\item[(ii)] There hold the exact total  integrals
\be
\int w^2(x)=\frac1{4g^2}\left(|\alpha_2|+|\beta_2|-\frac{\alpha_1+\beta_1}{\tan^2\theta}\right),\quad \int\vp^2(x)=\frac{\alpha_1+\beta_1}{g^2\tan^2\theta}.
\ee
\end{enumerate}
\end{theorem}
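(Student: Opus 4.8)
The plan is to reduce the second-order system (\ref{6.16}) to a semilinear problem that can be attacked with the weighted Sobolev space $\HH$ and the inequalities already established. First I would substitute $u_1=\ln w^2$ and $u_2=\ln\vp^2$, which recasts (\ref{6.16}) as
\be
u_1''=-4g^2\e^{u_1}-g^2\e^{u_2},\qquad u_2''=2g^2\e^{u_1}+\frac{g^2}{2\cos^2\theta}\e^{u_2}-\frac{g^2\vp_0^2}{2\cos^2\theta}.
\ee
The constant forcing in the second equation is responsible for the Gaussian decay of $\vp$, so I would remove it with the quadratic shift $\omega=\frac{g^2\vp_0^2}{4\cos^2\theta}x^2$ (chosen so that $\omega''$ cancels the constant), while the prescribed rates $\alpha_i,\beta_i$ are carried by smooth background functions $u_{0i}$ that are linear for $|x|\geq1$, exactly as in Section 5. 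Setting $u_1=\eta_1+u_{01}$ and $u_2=\eta_2+u_{02}-\omega$ produces equations for $\eta_1,\eta_2$ whose right-hand sides are integrable---the exponentials being damped either by $\e^{-\omega}$ or by the linearly decaying backgrounds---together with compactly supported terms $u_{0i}''$; one then seeks $\eta_i\in\HH$ tending to constants at $\pm\infty$. The Trudinger--Moser inequality (\ref{x5.13}), the Poincar\'e inequality of Lemma \ref{L2}, and the compact embedding of Lemma \ref{4.3} apply without change.

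The crux, and the step I expect to be the main obstacle, is that the interaction matrix of the system, with rows $(-4g^2,-g^2)$ and $(2g^2,\frac{g^2}{2\cos^2\theta})$, is \emph{not} symmetric and has negative determinant $-2g^4\tan^2\theta$, so the problem carries one convex and one concave direction. A diagonal rescaling does symmetrize its principal part into a positive-definite quadratic form, but it simultaneously produces a wrong-sign term $-2\int\e^{\eta_1+u_{01}}$ in the corresponding functional; by the Trudinger--Moser inequality this term may be driven to $-\infty$ faster than the quadratic energy grows, so the functional is \emph{not} bounded below and the direct minimization of Section 5 is unavailable. This is what forces a constrained formulation.

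To proceed I would first record the necessary integral identities obtained by integrating the two equations over $\bfR$ (using $u_1'(\pm\infty)$ determined by $\alpha_2,\beta_2$, and integrating the $\omega$-corrected second equation): these read $4g^2\int w^2+g^2\int\vp^2=|\alpha_2|+|\beta_2|$ and $2g^2\int w^2+\frac{g^2}{2\cos^2\theta}\int\vp^2=\frac12(\alpha_1+\beta_1+|\alpha_2|+|\beta_2|)$, and solving the pair reproduces precisely the two asserted values of $\int w^2$ and $\int\vp^2$; their positivity is equivalent to the first two inequalities of (\ref{xx6.17}). I would then minimize the (now admissible) functional over the constraint set on which $\int\e^{\eta_1+u_{01}}$ equals the prescribed value $\frac1{4g^2}(|\alpha_2|+|\beta_2|-\frac{\alpha_1+\beta_1}{\tan^2\theta})$. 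Fixing this integral freezes the offending term to a constant; coerciveness then follows from the positive-definite quadratic part, the good term $+\int\e^{\eta_2+u_{02}-\omega}$ bounded below via Jensen, and the linear contributions controlled by Lemma \ref{L2}---and it is here that the third inequality of (\ref{xx6.17}) is needed, to ensure that the growth recovered for $\bar\eta_1$ through the constraint does not overwhelm the quadratic energy. Weak lower semicontinuity and Lemma \ref{4.3} deliver a minimizer.

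Finally I would show that the Lagrange multiplier attached to the constraint vanishes, by testing the variational identity against the constant $\zeta_1\equiv1$ in the manner of (\ref{536})--(\ref{537}); this reproduces exactly the integral identity used to fix the constraint value, hence forces the multiplier to be zero, so the minimizer solves the genuine Euler--Lagrange system. Standard ODE regularity upgrades it to a classical solution; the sharp decay estimates of part (i) follow by integrating the equations once more and invoking $\eta_i'(x)\to0$ as $x\to\pm\infty$, the exact integrals of part (ii) are the identities just derived, and uniqueness follows from the strict convexity of the functional on the affine constraint set. Passing back through $w=\e^{u_1/2}$, $\vp=\e^{u_2/2}$ and recovering $P,Z$ from (\ref{xx5.3}) completes the proof, with (\ref{xx6.17}) simultaneously shown to be necessary for existence.
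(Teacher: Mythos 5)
Your proposal is sound and lands on the same analytic machinery as the paper (weighted space $\HH$, Trudinger--Moser, Poincar\'{e}, Jensen, and the compact embedding), and your integral identities and their equivalence with the first two inequalities of (\ref{xx6.17}) are exactly right; but you reach the constrained problem by a genuinely different route. The paper symmetrizes the non-variational system not by a diagonal rescaling but by the linear substitution $u_1=v_1+2v_2$, $u_2=v_1$ applied to $v_1=2\ln w$, $v_2=2\ln(\varphi/\varphi_0)$, which produces the system (\ref{x6.20}) with kinetic form $\frac1{2\tan^2\theta}(\eta_1')^2+\frac12(\eta_2')^2$; it then imposes \emph{two} constraints, (\ref{x6.27}) and (\ref{x6.29}), one for each equation, strips \emph{all} exponential terms out of the reduced functional (\ref{x6.30}), and verifies that the Lagrange multipliers come out as the specific nonzero values $\xi_1=-1$, $\xi_2=4$ needed to reassemble the equations. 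Your version keeps the variables $\ln w^2,\ln\varphi^2$, left-multiplies by a diagonal matrix $D$ so that $DA^{-1}$ is symmetric positive definite (this does work: the sign pattern of $A^{-1}$ forces $d_1d_2<0$, and since $\det A<0$ the resulting quadratic form is positive definite, at the price of the wrong-sign potential term $d_1\int e^{\eta_1+u_{01}}$ with $d_1<0$), constrains only that one offending integral to its natural value, and shows the single multiplier vanishes. That vanishing argument is correct --- testing against $\zeta_1\equiv1$ gives $(d_1-\mu)\sigma_1=\int\ell_1$, and prescribing $\sigma_1=\int\ell_1/d_1$ forces $\mu=0$ --- and is arguably cleaner, since the constrained critical point is then automatically an unconstrained one; the paper's two-constraint scheme buys a reduced functional containing no exponentials at all, at the cost of having to check that the multipliers take the correct nonzero values. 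In your scheme you must additionally control the free mean $\bar\eta_2$ of the unconstrained component through the good-sign exponential (Jensen) against its linear term, whose favourable sign is supplied by $\alpha_1+\beta_1>0$; in the paper both means are eliminated by resolving the two constraints as in (\ref{x6.35}).

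Two small corrections. First, the constraint set $\{\int e^{\eta_1+u_{01}}=\sigma_1\}$ is not affine and the functional is not convex off it, so your uniqueness-by-convexity remark does not go through as stated; the theorem does not assert uniqueness, and neither does the paper prove it here. Second, only the first two inequalities of (\ref{xx6.17}) are necessary (being the positivity of $\int\varphi^2$ and $\int w^2$); the third is a technical sufficient condition used, exactly as you say, to make the Trudinger--Moser loss from resolving the constraint for $\bar\eta_1$ subordinate to the quadratic energy, and should not be claimed as necessary.
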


Note that, in (\ref{xx6.17}), we only need to require the joint condition $\alpha_1+\beta_1>0$ but we do not 
need to require $\alpha_1,\beta_1$ be positive individually.

A sketch of the proof of the above theorem will be presented in the next section.

\section{Construction of electroweak domain wall solitons}
\setcounter{equation}{0}

In this section, we demonstrate the key steps in the proof of Theorem \ref{th2}.

Setting $v_1=2\ln w$ and $v_2=2\ln \left(\frac{\varphi}{\varphi_0}\right)$ in (\ref{6.16}), we obtain
\be\label{6.17}
v_1''=-4g^2\e^{v_1}-{g^2}{\varphi_0^2} \e^{v_2},\quad
v_2''=2 g^2\e^{v_1}+\frac{g^2\varphi^2_0}{2\cos^2\theta}(\e^{v_2}-1).
\ee
So the instanton lump type boundary condition, $w,\vp=0$ at infinity, as stated in Theorem \ref{th2}, becomes
\be\label{6.6}
v_1(\pm\infty)=-\infty,\quad v_2(\pm\infty)=-\infty.
\ee

As in Section 5, we will  pursue a variational solution of the problem. However, the coefficient matrix of the nonlinear terms of the equations
in (\ref{6.17}) is not symmetric, so that the formalism we used earlier is not applicable here. To overcome this difficulty, we use the new variables
\be\label{6.7}
u_1=v_1+2v_2,\quad u_2=v_1,
\ee
to recast (\ref{6.17}) into
\be\label{x6.20}
u_1''=g^2\varphi_0^2\tan^2\theta\e^{\frac12(u_1-u_2)}-\frac{g^2\varphi_0^2}{\cos^2\theta},\quad
u_2''=-g^2\varphi_0^2\e^{\frac12(u_1-u_2)}-4g^2\e^{u_2}.
\ee
Setting $\lm=\frac{g^2\varphi_0^2}{2\cos^2\theta}$ and adopting the same notation as in (\ref{4.1}) for the background functions $u_{0i}$ ($i=1,2$) and 
$\omega$, we see that, with $u_1=\eta_1+u_{01}-\omega$, $u_2=\eta_2+u_{02}$, the equations in  \eqref{x6.20} become
\bea
\eta_1''&=&g^2\varphi_0^2\tan^2\theta U\e^{\frac12(\eta_1-\eta_2)}-u_{01}'',\label{x6.21}\\
\eta_2''&=&-g^2\varphi_0^2 U\e^{\frac12(\eta_1-\eta_2)}-4g^2V\e^{\eta_2}-u_{02}'',\label{x6.22}
\eea
where
\be
U=\exp\left(\frac12(u_{01}-u_{02}-\omega)\right)\quad\mbox{and}\quad V=\e^{u_{02}}
\ee
are two weight functions. In order to ensure their exponential decay, we need to impose $\alpha_2,\beta_2<0$ in (\ref{4.1}) throughout this section.

Formerly, it is seen that the equations (\ref{x6.21}) and (\ref{x6.22}) permit the variational functional
\bea\label{x6.24}
&&E(\eta_1,\eta_2)\nn\\
&&=\int\left(\frac1{2\tan^2\theta}(\eta_1')^2+\frac{1}2(\eta_2')^2+2g^2\vp_0^2 U\e^{\frac12(\eta_1-\eta_2)}-4g^2V\e^{\eta_2}-\frac1{\tan^2\theta}u''_{01}\eta_1
-u''_{02}\eta_2\right)\!,\quad\quad\quad
\eea
over the natural space $\HH$, say, defined in Section 5. Unfortunately, this functional is not bounded from below, as may be seen by taking $\eta_1,\eta_2=$ constants as testing functions. Thus a direct minimization approach as used in Section 5 is not available.

To overcome this difficulty, we shall pursue a constrained minimization method as in \cite{SY2}. For this purpose, we first note that, if $(\eta_1,\eta_2)$ is a critical
point of (\ref{x6.24}) in $\HH$, then there hold
\bea
&&\int\left(\eta_1'\zeta_1'+g^2\vp_0^2\tan^2\theta U\e^{\frac12(\eta_1-\eta_2)}\zeta_1\right)=\int u_{01}''\zeta_1,\quad\forall\zeta_1\in\HH,\label{x6.25}\\
&&\int\left(\eta_2'\zeta_2'-g^2\vp_0^2 U \e^{\frac12(u_1-u_2)}\zeta_2-4g^2V\e^{\eta_2}\zeta_2\right)=\int u''_{02}\zeta_2,\quad\forall\zeta_2\in\HH.\label{x6.26}
\eea
Consequently, taking $\zeta_1,\zeta_2\equiv1$ as testing functions in (\ref{x6.25}), (\ref{x6.26}), respectively, we arrive at the constraints
\bea
&&\int g^2\varphi_0^2\tan^2\theta U\e^{\frac12(\eta_1-\eta_2)}=\alpha_1+\beta_1>0,\label{x6.27} \\
&&\int \left( g^2\varphi_0^2 U\e^{\frac12(\eta_1-\eta_2)}+4g^2V\e^{\eta_2}\right)=-(\alpha_2+\beta_2)=|\alpha_2|+|\beta_2|,\label{x6.28}
\eea
which are equivalent to saying that the right-hand sides of (\ref{x6.21}) and (\ref{x6.22}) are both integrated to zero value.

In view of (\ref{x6.27}) and (\ref{x6.28}), we obtain the following necessary condition on the free parameters, in addition to (\ref{x6.27}), 
\be\label{x6.29}
\int 4g^2V\e^{\eta_2}=(|\alpha_2|+|\beta_2|)-\frac{(\alpha_1+\beta_1)}{\tan^2\theta}>0.
\ee

From (\ref{x6.27}) and (\ref{x6.29}), we see that the exponential terms in (\ref{x6.24}) may be viewed as  ``frozen" so that we may consider all the ``active"
terms in it which may collectively be put into a new functional
\be\label{x6.30}
I(\eta_1,\eta_2)
=\int\left(\frac1{2\tan^2\theta}(\eta_1')^2+\frac{1}2(\eta_2')^2-\frac1{\tan^2\theta}u''_{01}\eta_1
-u''_{02}\eta_2\right).
\ee

We now show that a critical point of (\ref{x6.30}) subject to the constraints (\ref{x6.27}) and (\ref{x6.29}) is a solution to the equations (\ref{x6.21}) and (\ref{x6.22}). That is, we show that the Lagrange multipliers arise from such formulated constrained minimization problem yield correct values to allow us to
recover the equations. To this end, notice that the functionals on the left-hand sides of (\ref{x6.27}) and (\ref{x6.29}) have linearly independent Fr\'{e}chet
derivatives. Hence there are numbers (the renormalized Lagrange multipliers), $\xi_1,\xi_2\in\bfR$, such that
\bea
&&\frac1{\tan^2\theta}\int \left(\eta_1'\zeta_1'-u_{01}''\zeta_1\right)
=\xi_1\int g^2\varphi_0^2 U\e^{\frac12(\eta_1-\eta_2)}\zeta_1,\quad \zeta_1\in\mathscr{H},\label{x6.31}\\
&&\int\left(\eta_2'\zeta_2'- u_{02}''\zeta_2\right)
=-\xi_1\int_{\mathbb{R}}g^2\varphi_0^2 U\e^{\frac12(\eta_1-\eta_2)}\zeta_2+\xi_2\int g^2V\e^{\eta_2}\zeta_2,\quad\zeta_2\in\mathscr{H}.\label{x6.32}
\eea
Setting $\zeta_{1,2}\equiv1$ in the above equations and applying (\ref{x6.27}), (\ref{x6.29}), we obtain
\be\label{x6.33}
\xi_1=-1,\quad \xi_2=4.
\ee
Inserting (\ref{x6.33}) into (\ref{x6.31}) and (\ref{x6.32}), we see that the equations (\ref{x6.31}) and (\ref{x6.32}) give the weak form of the coupled equations
(\ref{x6.21}) and (\ref{x6.22}) as desired.

We next show the existence of a minimizer of the functional (\ref{x6.30}) in $\HH$ subject to (\ref{x6.27}) and (\ref{x6.29}) where $\HH$ is as defined in Section 5 with the norm given by (\ref{xx5.10})
for which the weight exponent $\beta$ satisfies
\be\label{x6.34}
0<\beta<\min\{|\alpha_2|,|\beta_2|\}.
\ee
This condition ensures that the functional given on the left-hand side of (\ref{x6.29}) is continuous with respect to the weak topology of $\HH$. Note
that the functional given
by the left-hand side of (\ref{x6.27}) automatically enjoys such a property for all values of the parameters since the weight function $U$ decays like
$\e^{-\frac\lm2 x^2}$ near infinities.

With the notation of Section 5, for $\eta_i=\overline{\eta}_i+\dot{\eta}_i$ ($i=1,2$), we may resolve (\ref{x6.27}) and (\ref{x6.29}) to get
\be\label{x6.35}
\overline{\eta}_1-\overline{\eta}_2=2\ln\gamma_1-2\ln\int U\e^{\frac12(\dot{\eta}_1-\dot{\eta}_2)},\quad
\overline{\eta}_2=\ln\gamma_2-\ln\int V\e^{\dot{\eta}_2},
\ee
where we use the suppressed notation
\be
\gamma_1=\frac{\alpha_1+\beta_1}{g^2\vp_0^2\tan^2\theta},\quad \gamma_2=\frac1{4g^2}\left(|\alpha_2|+|\beta_2|-\frac{(\alpha_1+\beta_1)}{\tan^2\theta}\right).
\ee
Substituting (\ref{x6.35}) into (\ref{x6.30}), we have
\bea\label{x6.37}
&&I(\eta_1,\eta_2)
=\int\left(\frac1{2\tan^2\theta}(\eta_1')^2+\frac{1}2(\eta_2')^2-\frac1{\tan^2\theta}u''_{01}\dot{\eta}_1
-u''_{02}\dot{\eta}_2\right)\nn\\
&&+\frac{2(\alpha_1+\beta_1)}{\tan^2\theta}\ln\int U\e^{\frac12(\dot{\eta}_1-\dot{\eta}_2)}
-\left(|\alpha_2|+|\beta_2|-\frac{\alpha_1+\beta_1}{\tan^2\theta}\right)\ln\int V\e^{\dot{\eta}_2}\nn\\
&&-\frac{(\alpha_1+\beta_2)}{\tan^2\theta}(2\ln\gamma_1+\ln\gamma_2)+\left(|\alpha_2|+|\beta_2|\right)\ln\gamma_2.
\eea
In view of the Poincar\'{e} inequality (\ref{x5.25}), the dotted terms in the first line of (\ref{x6.37}) are well controlled.
In view of Jensen's inequality with respect to the weighted measure $\dd\mu$, the first term in the second line of (\ref{x6.37}) is bounded from below. For the second term, we may use the Schwarz inequality and (\ref{x5.13}) to get
\be\label{x6.38}
\ln \int V\e^{\dot{\eta}_2}=\ln\int V{h_0^{-1}}\e^{\dot{\eta}_2}\,\dd\mu\leq \frac1{2b}\int (\dot{\eta}'_2)^2 \dd x+C_b,\quad \forall b\in(0,\beta),
\ee
where $C_b>0$ depends on $b$. Thus, in (\ref{x6.37}), we have by (\ref{x6.38}) the lower bound
\be\label{x6.39}
\int\frac12(\eta_2')^2-\left(|\alpha_2|+|\beta_2|-\frac{\alpha_1+\beta_1}{\tan^2\theta}\right)\ln\int V\e^{\dot{\eta}_2}
\geq\frac1{2b}\left\{b-\left(|\alpha_2|+|\beta_2|-\frac{\alpha_1+\beta_1}{\tan^2\theta}\right)\right\}\int(\dot{\eta}'_2)^2.
\ee
Consequently, if we impose the condition
\be\label{x6.40}
\min\{|\alpha_2|,|\beta_2|\}>|\alpha_2|+|\beta_2|-\frac{\alpha_1+\beta_1}{\tan^2\theta},
\ee
then we may choose $\beta$ to satisfy (\ref{x6.34}) and $b\in(0,\beta)$ so that the factor in front of the integral on the right-hand side of (\ref{x6.39}) is positive.
These lead us to the coercive lower bound
\be
I(\eta_1,\eta_2)\geq C_1 \|\dot{\eta}'_1\|^2_{L^2(\dd x)}+C_2\|\dot{\eta}'_2\|^2_{L^2(\dd x)}-C_3,\quad \eta_1,\eta_2\in\HH,
\ee
where $C_1,C_2,C_3$ are some positive constants independent of $\eta_1,\eta_2$. Hence, the existence of a minimizer of $I$ in $\HH$,
say $(\eta_1,\eta_2)$, subject to the constraints
(\ref{x6.27}) and (\ref{x6.29}), follows. 

We have seen that $(\eta_1,\eta_2)$ is a solution of the equations (\ref{x6.21}) and (\ref{x6.22}).
Using the same method as in Section 5, we may show that $\eta_i(x)\to$ constants as $x\to\pm\infty$ for $i=1,2$. Therefore we obtain the sharp asymptotic
estimates for $u_1=\eta_1+u_{01}-\omega$ and $u_2=\eta_2+u_{02}$ as follows:
\bea
&&u_1(x)=\mbox{O}(\alpha_1 x -\lm x^2),\quad u_2(x)=\mbox{O}(-|\alpha_2|x),\quad x\to\infty, \\
&&u_1(x)=\mbox{O}(-\beta_1 x -\lm x^2),\quad u_2(x)=\mbox{O}(|\beta_2|x),\quad x\to-\infty.
\eea

Finally, returning to the original variables through $w^2=\e^{v_1},\vp^2=\vp_0^2\e^{v_2}$, and (\ref{6.7}), we see that all the statements made in
Theorem \ref{th2} are established.

\section{Conclusions}

In this work, we have presented a series of BPS domain wall solitons arising in several classical gauge field theories, both Abelian and non-Abelian. In the former situation under consideration,
the governing equations are either partially or completely integrable to allow a thorough description of their relevant solutions. In the latter situation, 
the governing equations are complicated but may be investigated using techniques from calculus of variations which provide us with rather rich families of solutions
depending on several prescribed parameters. Included in our results are the following.

\begin{enumerate}
\item[(i)] For the BPS domain wall equations derived in \cite{Bol} in the Abelian Higgs theory, the governing equation may be integrated once to be reduced into a Friedmann type equation.
Although this reduced equation cannot be integrated further, a direct analysis may be carried out to give us a complete understanding of all the relevant
solutions realizing various phase-transition boundary conditions.

\item[(ii)] For the nonrelativistic and relativistic Abelian Chern--Simons--Higgs vortex equations, derived in \cite{JP1,JP2} and \cite{HKP,JW}, respectively, we have shown
that there are substructures to allow domain wall equations as that in the Abelian Higgs theory. Furthermore, we have shown that these equations are all integrable to allow all the relevant solutions to be
constructed explicitly. As by-products, with respect to integrability, we see  through the domain wall equation substructures that the Chern--Simons--Higgs equations are more
integrable than the Abelian Higgs equations; with respect to structures of solutions, we acquire a complete knowledge of solutions that interpolate topological
and non-topological boundary conditions, which may be useful and suggestive to the study of the BPS vortex equations in their original settings.

\item[(iii)] For the BPS domain wall equations arising in the $U(2)$ Yang--Mills--Higgs theory in which the Higgs scalar is taken to stay in the matrix representation,
derived in \cite{Bol}, although the existence of solutions realizing relevant boundary conditions may be achieved by reducing the system of equations into a
single equation already studied in (i) above, we have shown that, at least in the magnetic-to-magnetic phase transition situation, the system possesses a
$4$-parameter family of solutions which cannot be generated by the solution of the reduced single equation which actually depends on 2 parameters. Thus, in 
particular, the system of equations is not equivalent to its single equation reduction. Our method is through a direct minimization of the associated energy/action
functional for the system of equations. We have developed a weighted Sobolev space formalism which is useful for tackling other complicated domain wall 
equation problems.

\item[(iv)] For the BPS vortex equations \cite{AO2,AO3,AO4} arising in the classical electroweak theory where the Higgs field stays in the fundamental 
representation of $SU(2)\times U(1)$, we have shown that there is again a substructure of domain wall equations. However, unlike in (iii) above, the
coefficient matrix of the nonlinear terms is not symmetric so that the direct minimization method fails. In fact, after a suitable change of variables, 
it can be seen that the equations
enjoy a variational principle. Unfortunately, the energy functional is not bounded from below. In order to overcome this difficulty, we have formulated
a multiply constrained minimization approach. In such a setting, the constraints serve the purpose of making the energy functional bounded from below, or
more strongly, coercive, which enables us to prove the existence of a minimizer. Since the constraints give rise to the Lagrange multipliers, 
in order that our method works, we must show that
their values are precisely those needed to recover the equations of motion. We have shown indeed this is as desired. As a result, we are able to construct
again a $4$-parameter family of domain wall solutions under some explicitly stated sufficient conditions imposed on these parameters. This part of the work
indicates that the weighted Sobolev space formalism developed in (iii) is useful for treating other domain wall equations with different technical challenges.

\end{enumerate}

\medskip

{\bf Acknowledgments.} 
YY was partially supported
by Natural Science Foundation of China under Grant No. 11471100.


\begin{thebibliography}{99}

\bibitem{A} A. A. Abrikosov, On the magnetic properties of superconductors of the second group, \emph{Sov. Phys. JETP} \textbf{5} (1957) 1174-1182.

\bibitem{AO1}
J. Ambj{\o}rn and P. Olesen, Anti-screening of large magnetic fields by vector bosons,
{\em Phys. Lett.} B {\bf 214} (1988) 565--569.



\bibitem{AO2}
J. Ambj{\o}rn and P. Olesen, On electroweak magnetism, {\em Nucl. Phys.} B {\bf 315} (1989) 606--614.

\bibitem{AO3}
J. Ambj{\o}rn and P. Olesen, A magnetic condensate solution of the classical electroweak theory,
{\em Phys. Lett.} B {\bf218} (1989)  67--71.

\bibitem{AO4}
J. Ambj{\o}rn and P. Olesen, A  condensate solution of the classical electroweak theory
which interpolates between the broken and the symmetric phase, {\em Nucl. Phys.}
B {\bf330} (1990) 193--204.

\bibitem{Andreev}
V. A. Andreev, Application of the inverse scattering method to the equation $\sigma_{xt}=\e^\sigma$,
{\em Theoret. Math. Phys.} {\bf29} (1976) 1027--1035.


\bibitem{Aubin}
T. Aubin,
{\em Nonlinear Analysis on Manifolds: Monge--Amp\'{e}re Equations}, Springer, Berlin and
New York, 1982.

\bibitem{BM}
D. Bartolucci and A. Malchiodi, 
An improved geometric inequality via vanishing moments, with applications to singular Liouville equations,
{\em Commun. Math. Phys.} {\bf 322} (2013) 415--452. 

\bibitem{BT}
D. Bartolucci and G. Tarantello,  Liouville type equations with singular data and their applications to periodic multivortices for the electroweak theory,
{\em Comm. Math. Phys.} {\bf 229} (2002) 3--47.

\bibitem{B}
E. B. Bogomol'nyi, The stability of classical solutions, {\em Sov. J. Nucl. Phys.} {\bf24} (1976)
449--454.

\bibitem{Bol} S. Bolognesi, C. C. Chatterjee, S. B. Gudnason, K. Konishi, Vortex zero modes, large flux limit and Ambjorn--Nielsen--Olesen magnetic instabilities, \emph{J. High Energy Phys.} \textbf{10} (2014) 101.

\bibitem{Chae}
D. Chae and O. Yu. Imanuvilov, 
The existence of nontopological multivortex solutions
in the relativistic self-dual Chern--Simons theory, 
{\em Commun. Math. Phys.} {\bf215} (2000) 119--142.

\bibitem{Chan}
H. Chan,  C.-C. Fu, and C.-S. Lin,  Non-topological multi-vortex solutions to the self-dual Chern--Simons--Higgs equation,
{\em Commun. Math. Phys.} {\bf231} (2002) 189--221. 

\bibitem{CL}
C.-C. Chen and C.-S. Lin,  Mean field equation of Liouville type with singular data: topological degree, {\em Comm. Pure Appl. Math.} {\bf 68} (2015) 887--947.

\bibitem{CY} S. Chen and Y. Yang, Domain wall equations, Hessian of superpotential, and Bogomol'nyi bounds, \emph{Nucl. Phys.} B. \textbf{904} (2016) 470--493.

\bibitem{CHMcY}
X. Chen, S. Hastings, J. B. McLeod, and Y. Yang, A nonlinear elliptic equation arising from
gauge field theory and cosmology, {\em Proc. Roy. Soc.} A {\bf446} (1994) 453--478.

\bibitem{DE}
R. K. Dodd,  J. C. Eilbeck,  J. D. Gibbon,  and H. C. Morris, {\em Solitons and Nonlinear Wave Equations},   Academic Press, London, 1982.

\bibitem{DLV}
G. Dvali, H. Liu, T. Vachaspati, Sweeping away the monopole problem,
{\em Phys. Rev. Lett. } {\bf80}  (1998) 2281--2284.


\bibitem{FK}
L. D.  Faddeev and V. E. Korepin, Quantum theory of solitons, {\em Phys. Rep.} {\bf 42}  (1978) 1--87. 



\bibitem{GL}
V. L. Ginzburg and L. D. Landau, On the theory of superconductivity, in {\em Collected Papers
of L. D. Landau} (edited by D. Ter Haar), pp. 546--568, Pergamon, New York, 1965.

\bibitem{HKP}
J. Hong, Y. Kim and P.-Y. Pac, Multivortex solutions of the Abelian Chern--Simons--Higgs
theory, {\em Phys. Rev. Lett.} {\bf64} (1990) 2330--2333.

\bibitem{JLW}
R. Jackiw, K. Lee, and E. J. Weinberg, Self-dual Chern--Simons solitons, {\em Phys. Rev.}
D {\bf42} (1990) 3488--3499.


\bibitem{JP1}
R. Jackiw and S.-Y. Pi, Soliton solutions to the gauged nonlinear Schr\"{o}dinger equation
on the plane, {\em Phys. Rev. Lett.} {\bf64} (1990) 2969--2972.

\bibitem{JP2}
R. Jackiw and S.-Y. Pi, Classical and quantum nonrelativistic Chern--Simons theory, {\em
Phys. Rev.} D {\bf42} (1990) 3500--3513.

\bibitem{JPW}
R. Jackiw, S.-Y. Pi, and E. J. Weinberg, Topological and non-topological solitons in relativistic
and non-relativistic Chern--Simons theory, {\em Particles, Strings and Cosmology} (Boston, 1990),
pp. 573--588, World Sci. Pub., River Edge, NJ, 1991.


\bibitem{JW}
R. Jackiw and E. J. Weinberg, Self-dual Chern--Simons vortices, {\em Phys. Rev. Lett.}
{\bf64} (1990) 2334--2337.

\bibitem{JT}
A. Jaffe and C. H. Taubes, {\em Vortices and Monopoles}, Birkh\"{a}user, Boston, 1980.


\bibitem{Lai}
C. H. Lai (ed.), {\em Selected Papers on Gauge Theory of Weak and Electromagnetic Interactions},
World Scientific, Singapore, 1981.


\bibitem{Leznov}
A. N. Leznov, On the complete integrability of a nonlinear system of partial differential
equations in two-dimensional space, {\em Theoret. Math. Phys.} {\bf42} (1980) 225--229.

\bibitem{LY} 
E. H. Lieb and Y. Yang, 
Non-abelian vortices in supersymmetric gauge field theory via direct methods,
{\em Commun. Math. Phys.} {\bf 313}  (2012) 445--478. 


\bibitem{L}
J. Liouville, Sur l'\'{e}quation aux diff\'{e}rences partielles $\frac{\dd^2\log\lm}{\dd u\dd
v}\pm\frac{\lm}{2a^2}=0$, {\em Journal de Math\'{e}matiques Pures et Appl.} {\bf 18} (1853)
71--72.

\bibitem{McC}
P. J. McCarthy, B\"{a}cklund transformations as nonlinear Dirac equations, {\em Lett.
Math. Phys.} {\bf2} (1977) 167--170. 

\bibitem{Mc} R. C. McOwen, Conformal metrics in $\mathbb{R}^2$ with prescribed Gaussian curvature and positive total curvature, \emph{Indiana Univ, Math. J.} \textbf{34} (1985) 97--104.

\bibitem{M}
J. Moser, A sharp form of an inequality of N. Trudinger, {\em Indiana U. Math. J.} {\bf20} (1971) 1077--1092.


\bibitem{NO} H. B. Nielsen and P. Olesen, Vortex line models for dual strings, \emph{Nuclear Phys}. B \textbf{61} (1973) 45-61.


\bibitem{PV}
 L. Pogosian and T. Vachaspati, Space of kink solutions in $ SU(N)\times Z_2$,
 {\em Phys. Rev.}  D {\bf64} (2001) 105023.

\bibitem{PS}
M. K. Prasad and C. M. Sommerfield,
Exact classical solutions for the 't Hooft monopole and the Julia--Zee dyon, {\em
Phys. Rev. Lett.} {\bf35} (1975) 760--762.

\bibitem{P}
J. Preskill, Cosmological production of superheavy magnetic monopoles, {\em Phys. Rev. Lett.} {\bf 43} (1979) 1365--1368.


\bibitem{Pre1}
J. Preskill, Vortices and monopoles, in {\em Architecture of Fundamental Interactions at
Short Distances}, edited by P. Ramond and R. Stora, Elsevier, Amsterdam, pp. 235--337,
1987.


\bibitem{R}
R. Rajaraman,  {\em Solitons and Instantons: An Introduction to Solitons and Instantons in Quantum Field Theory},
North-Holland,  Amsterdam, 1989.

\bibitem{S}
D. Saint-James,  G. Sarma, and E. J. Thomas,
{\em Type II Superconductivity},  Pergamon Press, Oxford, U. K., 1969.

\bibitem{Sakai}
 N. Sakai and Y. Yang, Moduli space of BPS Walls in supersymmetric gauge theories, \emph{Commun. Math. Phys.} \textbf{267} (2006) 783-800.

\bibitem{Sattinger}
D. H. Sattinger, Conformal metrics in $\bfR^2$ with prescribed curvature,
{\em Indiana U. Math. J.} {\bf22} (1972) 1--4.

\bibitem{Schiff}
J. Schiff, Integrability of Chern--Simons Higgs and Abelian Higgs vortex equations in a 
background metric, {\em J. Math. Phys.} {\bf32} (1991) 753--761.


\bibitem{SY1} J. Spruck and Y. Yang, On multivortices in the electroweak theory I: existence of periodic solutions, \emph{Commun. Math. Phys.} \textbf{144} (1992) 1--16.

\bibitem{SY2} J. Spruck and Y. Yang, On multivortices in the electroweak theory II: Existence of Bogomol'nyi solutions in $\mathbb{R}^2$, \emph{Commun. Math. Phys.} \textbf{144} (1992) 215--234.

\bibitem{SYcs1}
J. Spruck and Y. Yang, Topological solutions in the self-dual Chern--Simons theory: existence
and approximation, {\em Ann. Inst. H. Poincar\'{e} -- Anal. non lin\'{e}aire}
{\bf12} (1995) 75--97.

\bibitem{SYcs2}
J. Spruck and Y. Yang, The existence of non-topological solitons in the self-dual Chern--Simons
theory, {\em Commun. Math. Phys.} {\bf149} (1992) 361--376.


\bibitem{T1}
C. H. Taubes, Arbitrary $N$-vortex solutions to the first order Ginzburg--Landau equations, {\em Commun. Math.
Phys.} {\bf72} (1980) 277--292.

\bibitem{T2}
C. H. Taubes,  On the equivalence of the first and second order equations for gauge theories,
{\em Commun. Math. Phys.} {\bf75}, 207--227 (1980).

\bibitem{T}
M. Tinkham, {\em Introduction to Superconductivity}, 2nd ed., McGraw-Hill, New York, 1996.


\bibitem{Trudinger}
N. Trudinger, On embedding into Orlitz spaces and some applications, {\em J. Math. Phys.}
{\bf17} (1967) 473--484.

\bibitem{Wick}
G. C. Wick,  Properties of Bethe-Salpeter wave functions, {\em Phys. Rev.} {\bf 96} (1954) 1124--1134. 

\bibitem{Ylee} Y. Yang, The Lee--Weinberg magnetic monopole of unit charge: existence and uniqueness, \emph{Physica} D \textbf{117}  (1998) 215--240.

\bibitem{Ybook}
Y. Yang, {\em Solitons in Field Theory and Nonlinear Analysis}, Springer Monographs in Mathematics, Springer, 2001.

\end{thebibliography}
\end{document}